\title{Reachability and Safety Games under TSO Semantics \\ \normalsize{(Extended Version)}}
\author{
    Stephan Spengler
    \institute{Uppsala University \\ Uppsala, Sweden}
    \email{stephan.spengler@it.uu.se}
}
\DeclareMathAlphabet{\mathcal}{OMS}{cmsy}{m}{n}
\newcolumntype{C}[1]{>{\centering\arraybackslash}p{#1}}
\newcommand{\multicaption}[2]{\caption{\tabular[t]{@{}l@{}}#1\\#2\endtabular}}
\newtheorem{theorem}{Theorem}
\newaliascnt{lemma}{theorem}
\newtheorem{lemma}[lemma]{Lemma}
\newaliascnt{corollary}{theorem}
\newaliascnt{claim}{theorem}
\newtheorem{claim}[claim]{Claim}
\theoremstyle{definition} 
\newaliascnt{remark}{theorem}
\newtheorem*{lemma*}{Lemma}
\newtheorem*{claim*}{Claim}
\newcommand{\inference}[3]{\textbf{#1} & \frac{#2}{#3}}
\newcommand{\Nat}{\mathbb{N}}
\newcommand{\newsemantics}[2]{\newcommand{#1}{\mathsf{#2}}}
\newcommand{\newfunction}[2]{\newcommand{#1}{\mathop\mathrm{#2}}} 
\newcommand{\newcomponent}[2]{\newcommand{#1}{\mathsf{#2}}}
\newcommand{\newinstruction}[2]{\newcommand{#1}{\mathtt{#2}}}
\newcommand{\renewinstruction}[2]{\renewcommand{#1}{\mathtt{#2}}}
\newcommand{\sizeof}[1]{|#1|}
\newcommand{\of}[1]{(#1)}
\newcommand{\tuple}[1]{\langle#1\rangle}
\newcommand{\set}[1]{\{#1\}}
\newcommand{\cof}[1][]{\ifthenelse{\isempty{#1}}{}{\of{#1}}}
\renewcommand{\to}[1][]{\mathop{\xrightarrow{~#1~}}}
\renewcommand{\part}{\rightharpoonup}
\newcomponent{\word}{w}
\newcommand{\newclass}[2]{\newcommand{#1}{\textsc{#2}}}
\newclass{\exptime}{ExpTime}
\newclass{\etime}{ETime}
\newclass{\nexptime}{NexpTime}
\newclass{\class}{Class}
\newclass{\pspace}{PSpace}
\newclass{\npspace}{NPSpace}
\newclass{\expspace}{ExpSpace}
\newclass{\dtime}{DTime}
\newclass{\dspace}{DSpace}
\newcommand{\TS}{\mathcal{T}}
\newcommand{\kstar}{^{*}}
\newcomponent{\conf}{c}
\newcomponent{\confset}{C}
\newcomponent{\lbl}{label}
\newcomponent{\lblset}{L}
\newcomponent{\state}{q}
\newcomponent{\stateset}{Q}
\newfunction{\post}{Post}
\newfunction{\pre}{Pre}
\newsemantics{\final}{final}
\newsemantics{\target}{target}
\newsemantics{\all}{all}
\newsemantics{\true}{true}
\newsemantics{\false}{false}
\newcommand{\program}{\mathcal{P}}
\newcomponent{\process}{Proc}
\newcomponent{\tso}{TSO}
\newcomponent{\dtso}{DTSO}
\newcomponent{\transition}{\delta}
\newcomponent{\xvar}{x}
\newcomponent{\varset}{Vars}
\newcomponent{\dval}{d}
\newcomponent{\valset}{Dom}
\newcomponent{\msg}{m}
\newcomponent{\instr}{instr}
\newcomponent{\instrs}{Instrs}
\newcommand{\xd}{\xvar, \dval}
\newcomponent{\self}{self}
\newcomponent{\other}{other}
\newcomponent{\own}{own}
\newinstruction{\rd}{rd}
\renewinstruction{\wr}{wr}
\newinstruction{\nop}{skip}
\newinstruction{\mf}{mf}
\newinstruction{\up}{up} 
\newinstruction{\prop}{prop} 
\newinstruction{\del}{del} 
\newcommand{\indexset}{\mathcal{I}}
\newcommand{\statemap}{\mathcal{S}}
\newcommand{\buffermap}{\mathcal{B}}
\newcommand{\memorymap}{\mathcal{M}}
\newcommand{\pid}{\iota}
\newcomponent{\view}{v}
\newcomponent{\viewset}{V}
\newcommand{\valuemap}{\mathcal{V}}
\newcommand{\fencemap}{\mathcal{F}}
\newcommand{\game}{\mathcal{G}}
\newcomponent{\play}{P}
\newcomponent{\wincon}{W}
\newcomponent{\bisim}{Z}
\newcommand{\colset}{\mathcal{C}}
\newcommand{\extend}{\uparrow_\program}
\newcommand{\restrict}{\downarrow^\pid}
\newcomponent{\subconfset}{D}
\newcomponent{\subconf}{d}
\newcomponent{\altconf}{d}
\newcommand{\channelsystem}{\mathcal{L}}
\newcomponent{\channelstate}{s}
\newcomponent{\channelstateset}{S}
\newcomponent{\channelset}{L}
\newcomponent{\channelmessage}{m}
\newcomponent{\channelmessageset}{M}
\newcomponent{\channeloperation}{op}
\newcomponent{\channeloperationset}{Op}
\newcomponent{\letter}{\sigma}
\newcomponent{\alphabet}{\Sigma}
\newcomponent{\tmL}{L}
\newcomponent{\tmR}{R}
\newcomponent{\tmD}{D}
\newcomponent{\pos}{i}
\newcomponent{\posj}{j}
\newcommand{\xrd}{\xvar_\rd}
\newcommand{\xwr}{\xvar_\wr}
\newcomponent{\yvar}{y}
\newcomponent{\zvar}{z}
\newcomponent{\hstate}{h}
\newcomponent{\rstate}{r}
\newcomponent{\channeledge}{e}
\begin{document}

\maketitle

\begin{abstract}
We consider games played on the transition graph of concurrent programs running under the Total Store Order (TSO) weak memory model.
Games are frequently used to model the interaction between a system and its environment, in this case between the concurrent processes and the nondeterministic TSO buffer updates.
In our formulation, the game is played by two players, who alternatingly make a move:
The \emph{process player} can execute any enabled instruction of the processes, while the \emph{update player} takes care of updating the messages in the buffers that are between each process and the shared memory.
We show that the reachability and safety problem of this game reduce to the analysis of single-process (non-concurrent) programs.
In particular, they exhibit only finite-state behaviour.
Because of this, we introduce different notions of \emph{fairness}, which force the two players to behave in a more realistic way.
Both the reachability and safety problem then become undecidable.
\end{abstract}

\section{Introduction}

In concurrent programs, different processes interact with each other through the use of shared memory. Programmers usually unconsciously assume that the semantics adhere to the Sequential Consistency (SC) memory model \cite{DBLP:journals/tc/Lamport79}. In SC, the execution of processes can be interleaved, but write instructions are visible in the memory in the exact order in which they were issued. However, most modern architectures, such as Intel x86 \cite{x86-swdmanual-1-3}, SPARC \cite{sparc9}, IBM's POWER \cite{power-isa-v31b}, and ARM \cite{arm-v7ar-refman}, implement several relaxations and optimisations that improve memory access latency but break SC assumptions. A standard model that is weaker than SC allows the reordering of reads and writes of the same process, as long as it maintains the appearance of SC from the perspective of each individual process. The implementation of this optimisation adds an unbounded first-in-first-out write buffer between each process and the shared memory. The buffer is used to delay write operations. This model is called Total Store Ordering (TSO) and is a faithful formalisation of SPARC and Intel x86 ~\cite{DBLP:conf/tphol/OwensSS09,DBLP:journals/cacm/SewellSONM10}.

Verification under TSO semantics is difficult due to the unboundedness of the buffers. Even if each process can be modelled as a finite-state system, the program itself has a state space of infinite size. The reachability problem for programs running under TSO semantics is to decide whether a target program state is reachable from a given initial state during program execution. If the target state is considered to be a bad state, it is also called the safety problem. Solving reachability and safety helps in deciding if a program is correct, i.e. if it adheres to a specification or if it can avoid states of undefined behaviour. Using alternative but equivalent semantics, it has been shown that the reachability problem is decidable \cite{DBLP:conf/popl/AtigBBM10,DBLP:conf/tacas/AbdullaACLR12,DBLP:journals/lmcs/AbdullaABN18}. Furthermore, lossy channel system \cite{wsts2,wsts1,DBLP:conf/icalp/AbdullaJ94,DBLP:journals/ipl/Schnoebelen02} can be simulated by programs running under TSO semantics \cite{DBLP:conf/popl/AtigBBM10}. This implies that the reachability problem is non-primitive recursive \cite{DBLP:journals/ipl/Schnoebelen02} and the repeated reachability problem is undecidable \cite{DBLP:conf/icalp/AbdullaJ94}. Additionally, the termination problem has been shown to be decidable \cite{DBLP:journals/siglog/Atig20} using the framework of well-structured transition systems \cite{wsts1,wsts2}.

In this paper, we consider games played on the transition graph of concurrent programs running under TSO semantics. Formal games provide a framework to reason about the behaviour of a system and the interaction between the system and its environment. In particular, they have been extensively used in controller synthesis problems \cite{DBLP:journals/tcs/ArnoldVW03,DBLP:conf/ecbs/BackS04,DBLP:conf/concur/BassetKW14,DBLP:conf/adhs/BalkanVT15,DBLP:journals/corr/abs-2209-10319}. A previous paper introduces safety games in which two players alternatingly execute instructions of a concurrent program \cite{DBLP:journals/corr/abs-2310-00990}. Motivated by this work, we propose a game setting that more closely models the interplay between a system and the environment: The first player controls the execution of the program instructions, while the second player handles the nondeterministic updates of the store buffers to the shared memory. This model sees the process and the update mechanism as antagonistic, and allows us to reason about the correctness of the program regardless of the update behaviour.

We consider two types of game objectives: In a reachability game, the process player tries to reach a given set of target states, while the update player tries to avoid this; In a safety game, these two roles are reversed. We show that in both cases finding the winner of the game reduces to the analysis of games being played on a program with just one process. Furthermore, we show that these games are bisimilar to finite-state games and thus decidable. In particular, the reachability and safety problem are \pspace-complete. The reason that the concurrent programs exhibit a finite-state character lies in the optimal behaviour of the two players. If the player that controls the processes has a winning strategy, then she can win by playing in only one process, ignoring all the other processes of the program. On the other hand, if the player controlling the buffer is able to win, she can do so by never letting any write operation reach the memory. In both cases, there is no concurrency in the sense that the processes do not interact or communicate with each other. This is not realistic, since we should be able to assume that if the program runs a sufficiently long duration (1) every process will be executed and (2) every write stored in the buffer will be updated to the memory.

We rectify this issue by introducing two fairness conditions. First, in an infinite run the process player must execute each enabled process infinitely many times. Second, the update player must make sure that each write operation reaches the memory after finitely many steps. We show that both the reachability and safety problem become undecidable with these restrictions. To do so, we use a reduction from perfect channel systems adapted from \cite{DBLP:journals/corr/abs-2310-00990}.

Finally, we investigate an alternative TSO semantics in our game setting. The authors of \cite{DBLP:journals/lmcs/AbdullaABN18} propose a load-buffer semantics for TSO which reverts the direction of the information flow between the processes and the shared memory. In their model, the buffer is filled with values from the memory which can later be read by the process. Using well-structured transition systems, they showed that it is equivalent to the classical TSO semantics with respect to state reachability. We explore whether the equivalence also holds in the two-player game, but come to the conclusion that this is not the case. In particular, we construct a concurrent program that is won by the update player under store-buffer semantics but by the process player under load-buffer semantics.

\bigskip

This document is an extended version of the conference paper with the same title. It includes the full proofs of \autoref{lem:extend} and \autoref{claim:single-process} and gives a detailed explanation of the complexity of single-process games in \autoref{sec:complexity}.

\section{Preliminaries}

\paragraph{Transition Systems}
A \emph{(labelled) transition system} is a triple $\TS = \tuple{ \confset, \lblset, \to }$, where $\confset$ is a set of \emph{configurations}, $\lblset$ is a set of \emph{labels}, and $\to \subseteq \confset \times \lblset \times \confset$ is a \emph{transition relation}.
We usually write $\conf_1 \to[\lbl] \conf_2$ if $\tuple{ \conf_1, \lbl, \conf_2} \in \to$.
Furthermore, we write $\conf_1 \to \conf_2$ if there exists some $\lbl$ such that $\conf_1 \to[\lbl] \conf_2$.
A \emph{run} $\pi$ of $\TS$ is a sequence of transitions $\conf_0 \to[\lbl_1] \conf_1 \to[\lbl_2] \conf_2 \dots \to[\lbl_n] \conf_n$.
It is also written as $\conf_0 \to[\pi] \conf_n$.
A configuration $\conf'$ is \emph{reachable} from a configuration $\conf$, if there exists a run from $\conf$ to $\conf'$.

For a configuration $\conf$, we define $\pre\of\conf = \set{ \conf' \mid \conf' \to \conf }$ and $\post\of\conf = \set{ \conf' \mid \conf \to \conf' }$.
We extend these notions to sets of configurations $\confset'$ with $\pre(\confset') = \bigcup_{\conf \in \confset'} \pre\of\conf$ and $\post(\confset') = \bigcup_{\conf \in \confset'} \post\of\conf$.

An \emph{unlabelled transition system} is a transition system without labels.
Formally, it is defined as a transition system with a singleton label set.
In this case, we omit the labels.

\paragraph{Perfect Channel Systems}
Given a set of messages $\channelmessageset$, define the set of channel operations $\channeloperationset = \set{ !\channelmessage, ?\channelmessage \mid \channelmessage \in \channelmessageset} \cup \set\nop$.
A \emph{perfect channel system} (PCS) is a triple $\channelsystem = \tuple{ \channelstateset, \channelmessageset, \transition }$, where $\channelstateset$ is a set of states, $\channelmessageset$ is a set of messages, and $\transition \subseteq \channelstateset \times \channeloperationset \times \channelstateset$ is a transition relation.
We write $\channelstate_1 \to[\channeloperation] \channelstate_2$ if $\tuple{ \channelstate_1, \channeloperation, \channelstate_2 } \in \transition$.

Intuitively, a PCS models a finite state automaton that is augmented by a \emph{perfect} (i.e. non-lossy) FIFO buffer, called \emph{channel}.
During a \emph{send operation} $!\channelmessage$, the channel system appends $\channelmessage$ to the tail of the channel.
A transition $?\channelmessage$ is called \emph{receive operation}.
It is only enabled if the channel is not empty and $\channelmessage$ is its oldest message.
When the channel system performs this operation, it removes $\channelmessage$ from the head of the channel.
Lastly, a $\nop$ operation just changes the state, but does not modify the buffer.

The formal semantics of $\channelsystem$ are defined by a transition system $\TS_\channelsystem = \tuple{ \confset_\channelsystem, \lblset_\channelsystem, \to_\channelsystem }$, where $\confset_\channelsystem = \channelstateset \times \channelmessageset\kstar$, $\lblset_\channelsystem = \channeloperationset$ and the transition relation $\to_\channelsystem$ is the smallest relation given by:
\begin{itemize}
	\item If $\channelstate_1 \to[!\channelmessage] \channelstate_2$ and $\word \in \channelmessageset\kstar$, then $\tuple{ \channelstate_1, \word } \to[!\channelmessage]_\channelsystem \tuple{ \channelstate_2, \channelmessage \cdot \word }$.
	\item If $\channelstate_1 \to[?\channelmessage] \channelstate_2$ and $\word \in \channelmessageset\kstar$, then $\tuple{ \channelstate_1, \word \cdot \channelmessage } \to[?\channelmessage]_\channelsystem \tuple{ \channelstate_2, \word }$.
	\item If $\channelstate_1 \to[\nop] \channelstate_2$ and $\word \in \channelmessageset\kstar$, then $\tuple{ \channelstate_1, \word } \to[\nop]_\channelsystem \tuple{ \channelstate_2, \word }$.
\end{itemize}
A state $\channelstate_F \in \channelstateset$ is \emph{reachable} from a configuration $\conf_0 \in \confset_\channelsystem$, if there exists a configuration $\conf_F = \tuple{ \channelstate_F, \word_F }$ such that $\conf_F$ is reachable from $\conf_0$ in $\TS_\channelsystem$.
The \textbf{state reachability problem} of PCS is, given a perfect channel system $\channelsystem$, an initial configuration $\conf_0 \in \confset_\channelsystem$ and a final state $\channelstate_F \in \channelstateset$, to decide whether $\channelstate_F$ is reachable from $\conf_0$ in $\TS_\channelsystem$.
It is undecidable \cite{DBLP:journals/jacm/BrandZ83}.

\section{Concurrent Programs}

\paragraph{Syntax}
Let $\valset$ be a finite data domain and $\varset$ be a finite set of shared variables over $\valset$.
We define the \emph{instruction set} $\instrs = \set{ \rd\of\xd, \wr\of\xd \mid \xvar \in \varset, \dval \in \valset } \cup \set{ \nop, \mf }$,
which are called \emph{read}, \emph{write}, \emph{skip} and \emph{memory fence}, respectively.
A process is represented by a finite state labelled transition system.
It is given as the triple $\process = \tuple{ \stateset, \instrs, \transition }$, where $\stateset$ is a finite set of \emph{local states} and $\transition \subseteq \stateset \times \instrs \times \stateset$ is the transition relation.
As with transition systems, we write $\state_1 \to[\instr] \state_2$ if $\tuple{ \state_1, \instr, \state_2} \in \transition$ and $\state_1 \to \state_2$ if there exists some $\instr$ such that $\state_1 \to[\instr] \state_2$.

A \emph{concurrent program} is a tuple of processes $\program = \tuple{ \process^\pid }_{\pid \in \indexset}$, where $\indexset$ is a finite set of process identifiers.
For each $\pid \in \indexset$ we have $\process^\pid = \tuple{ \stateset^\pid, \instrs, \transition^\pid }$.
A \emph{global} state of $\program$ is a function $\statemap: \indexset \to \bigcup_{\pid \in \indexset} \stateset^\pid$ that maps each process to its local state, i.e $\statemap(\pid) \in \stateset^\pid$.

\paragraph{TSO Semantics}
Under TSO semantics, the processes of a concurrent program do not interact with the shared memory directly, but indirectly through a FIFO \emph{store buffer} instead.
When performing a \emph{write} instruction $\wr\of\xd$, the process adds a new message $\tuple\xd$ to the tail of its store buffer.
A \emph{read} instruction $\rd\of\xd$ works differently depending on the current buffer content of the process.
If the buffer contains a write message on variable $\xvar$, the value $\dval$ must correspond to the value of the most recent such message.
Otherwise, the value is read directly from memory.
A \emph{skip} instruction only changes the local state of the process.
The \emph{memory fence} instruction is disabled, i.e. it cannot be executed, unless the buffer of the process is empty.
Additionally, at any point during the execution, the process can \emph{update} the write message at the head of its buffer to the memory.
For example, if the oldest message in the buffer is $\tuple\xd$, it will be removed from the buffer and the memory value of variable $\xvar$ will be updated to contain the value $\dval$.
This happens in a nondeterministic manner.

Formally, we introduce a TSO \emph{configuration} as a tuple $\conf = \tuple{ \statemap, \buffermap, \memorymap }$, where:
\begin{itemize}
	\item $\statemap: \indexset \to \bigcup_{\pid \in \indexset} \stateset^\pid$ is a global state of $\program$.
	\item $\buffermap: \indexset \to (\varset \times \valset)\kstar$ represents the buffer state of each process.
	\item $\memorymap: \varset \to \valset$ represents the memory state of each shared variable.
\end{itemize}
Given a configuration $\conf$, we write $\statemap\of\conf$, $\buffermap\of\conf$ and $\memorymap\of\conf$ for the global program state, buffer state and memory state of $\conf$.
The semantics of a concurrent program running under TSO is defined by a transition system $\TS_\program = \tuple{ \confset_\program, \lblset_\program, \to_\program }$,
where $\confset_\program$ is the set of all possible TSO configurations
and $\lblset_\program = \set{ \instr_\pid \mid \instr \in \instrs, \pid \in \indexset } \cup \set{ \up_\iota \mid \pid \in \indexset }$ is the set of labels.
The transition relation $\to_\program$ is given by the rules in \autoref{fig:tso-semantics}, where we use $\buffermap\of\pid|_{\set\xvar \times \valset}$ to denote the restriction of $\buffermap\of\pid$ to write messages on the variable $\xvar$.
Furthermore, we define $\up\kstar$ to be the transitive closure of $\set{ \up_\iota \mid \pid \in \indexset }$, i.e. $\conf_1 \to[\up\kstar]_\program \conf_2$ if and only if $\conf_2$ can be obtained from $\conf_1$ by some amount of buffer updates.

\begin{figure}
\centering
\begin{equation*}
\begin{array}{lc}

\inference{read-own-write}
	{\state \to[\rd\of\xd] \state' \qquad \statemap\of\pid = \state \qquad \buffermap\of\pid|_{\set\xvar \times \valset} = \tuple\xd \cdot \word}
	{\tuple{ \statemap, \buffermap, \memorymap} \to[\rd\of\xd_\pid]_\program \tuple{ \statemap[\pid \leftarrow \state'], \buffermap, \memorymap}}
\bigskip\\
\inference{read-from-memory}
	{\state \to[\rd\of\xd] \state' \qquad \statemap\of\pid = \state \qquad \buffermap\of\pid|_{\set\xvar \times \valset} = \varepsilon \qquad \memorymap\of\xvar = \dval}
	{\tuple{ \statemap, \buffermap, \memorymap} \to[\rd\of\xd_\pid]_\program \tuple{ \statemap[\pid \leftarrow \state'], \buffermap, \memorymap}}
\bigskip\\
\inference{write}
	{\state \to[\wr\of\xd] \state' \qquad \statemap\of\pid = \state}
	{\tuple{ \statemap, \buffermap, \memorymap} \to[\wr\of\xd_\pid]_\program \tuple{ \statemap[\pid \leftarrow \state'], \buffermap[\pid \leftarrow \tuple\xd \cdot \buffermap\of\pid], \memorymap}}
\bigskip\\
\inference{skip}
	{\state \to[\nop] \state' \qquad \statemap\of\pid = \state}
	{\tuple{ \statemap, \buffermap, \memorymap} \to[\nop_\pid]_\program \tuple{ \statemap[\pid \leftarrow \state'], \buffermap, \memorymap}}
\bigskip\\
\inference{memory-fence}
	{\state \to[\mf] \state' \qquad \statemap\of\pid = \state \qquad \buffermap\of\pid = \varepsilon}
	{\tuple{ \statemap, \buffermap, \memorymap} \to[\mf_\pid]_\program \tuple{ \statemap[\pid \leftarrow \state'], \buffermap, \memorymap}}
\bigskip\\
\inference{update}
	{\buffermap\of\pid = \word \cdot \tuple\xd}
	{\tuple{ \statemap, \buffermap, \memorymap} \to[\up_\pid]_\program \tuple{ \statemap, \buffermap[\pid \leftarrow \word], \memorymap[\xvar \leftarrow \dval]}}

\end{array}
\end{equation*}
\caption{TSO semantics}
\label{fig:tso-semantics}
\end{figure}

A global state $\statemap_F$ is \emph{reachable} from an initial configuration $\conf_0$, if there is a configuration $\conf_F$ with $\statemap(\conf_F) = \statemap_F$ such that $\conf_F$ is reachable from $\conf_0$ in $\TS_\program$.
The \textbf{state reachability problem} of TSO is, given a program $\program$, an initial configuration $\conf_0$ and a final global state $\statemap_F$, to decide whether $\statemap_F$ is reachable from $\conf_0$ in $\TS_\program$.

\section{Games}


A \emph{game} is an unlabelled transition system, in which two players A and B take turns making a \emph{move} in the transition system, i.e. changing the state of the game from one configuration to an adjacent one.
In a \emph{reachability game}, the goal of player A is to reach a given set of target states, while player B tries to avoid this.
In a \emph{safety game}, the roles are swapped.

Formally, a game is defined as a tuple $\game = \tuple{ \confset, \confset_A, \confset_B, \to}$, where $\confset$ is the set of configurations, $\confset_A$ and $\confset_B$ form a partition of $\confset$, and $\to$ is a transition relation on $\confset$.
For the games considered in this paper, the relation will always be restricted to $\to \subseteq (\confset_A \times \confset_B) \cup (\confset_B \times \confset_A)$, which means that the two players take turns alternatingly.

\paragraph{Plays and Winning Conditions}
An \emph{infinite play} $\play$ of $\game$ is an infinite sequence $\conf_0, \conf_1, \dots$ such that $\conf_i \to \conf_{i+1}$ for all $i \in \Nat$.
Similarly, a \emph{finite play} is a finite sequence $\conf_0, \conf_1, \dots, \conf_n$ such that $\conf_i \to \conf_{i+1}$ for all $i \in [0, \dots, n-1]$ and $\post(\conf_n) = \emptyset$, i.e. the play ends in a deadlock.
A \emph{winning condition} $\wincon$ is a subset of all infinite plays.
We say that player A is the winner of a play, if either the play is infinite and an element of $\wincon$, or if it is finite and ends in a deadlock for player B, i.e. $\conf_n \in \confset_B$.
Otherwise, player B wins the play.

In this work, we will consider two types of winning conditions.
A \emph{reachability condition} is given by a set $\confset_R \subseteq \confset$ which induces the winning condition $\wincon_R = \set{ \play = \conf_0, \conf_1, \dots \mid \exists i \in \Nat: \conf_i \in \confset_R}$, i.e. the set of all plays that visit a configuration in $\confset_R$.
Accordingly, a \emph{safety condition} is given by a set $\confset_S \subseteq \confset$ which induces the winning condition $\wincon_S = \set{ \play = \conf_0, \conf_1, \dots \mid \forall i \in \Nat: \conf_i \not\in \confset_S}$, i.e. the set of all plays that never visit a configuration in $\confset_S$.
Reachability games and safety games are dual to each other in the sense that a reachability game with winning condition $\confset_R$ can be seen as a safety game with winning condition $\confset_S = \confset \setminus \confset_R$, where the roles of players A and B are swapped.

\paragraph{Strategies}
A \emph{strategy} of player A is a partial function $\sigma_A: \confset\kstar \part \confset_B$, such that $\sigma_A(\conf_0, \dots, \conf_n)$ is defined if and only if $\conf_0, \dots, \conf_n$ is a prefix of a play, $\conf_n \in \confset_A$ and $\sigma_A(\conf_0, \dots, \conf_n) \in \post(\conf_n)$.
A strategy $\sigma_A$ is called \emph{positional}, if it only depends on $\conf_n$, i.e. if $\sigma_A(\conf_0, \dots, \conf_n) = \sigma_A(\conf_n)$ for all $(\conf_0, \dots, \conf_n)$ on which $\sigma_A$ is defined.
Thus, a positional strategy is usually given as a total function $\sigma_A: \confset_A \to \confset_B$.
For player B, strategies are defined accordingly.

Two strategies $\sigma_A$ and $\sigma_B$ together with an initial configuration $\conf_0$ induce a finite or infinite play $\play(\conf_0, \sigma_A, \sigma_B) = \conf_0, \conf_1, \dots$ such that $\conf_{i+1} = \sigma_A(\conf_0, \dots, \conf_i)$ for all $\conf_i \in \confset_A$ and $\conf_{i+1} = \sigma_B(\conf_0, \dots, \conf_i)$ for all $\conf_i \in \confset_B$.
Given a winning condition $\wincon$, a strategy $\sigma_A$ is \emph{winning} from a configuration $\conf_0$, if for \emph{all} strategies $\sigma_B$ it holds that player A wins the play $\play(\conf_0, \sigma_A, \sigma_B)$.
That is, for each $\sigma_B$, either $\play(\conf_0, \sigma_A, \sigma_B) \in \wincon$ or the play is finite and ends in a deadlock of player B.
A configuration $\conf_0$ is \emph{winning} for player A if she has a strategy that is winning from $\conf_0$.
Equivalent notions exist for player B.
Given a reachability condition $\wincon_R$ / a safety condition $\wincon_S$, the \textbf{reachability problem} / \textbf{safety problem} for a game $\game$ and a configuration $\conf_0$ is to decide whether $\conf_0$ is winning for player A.
\begin{lemma}[Proposition 2.21 in \cite{DBLP:conf/dagstuhl/Mazala01}]
\label{lem:positional}
    In reachability and safety games, every configuration is winning for exactly one player.
    A player with a winning strategy also has a positional winning strategy.
\end{lemma}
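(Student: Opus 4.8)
The plan is to prove the two claims — determinacy (every configuration is won by exactly one player) and positional determinacy — together, since the standard argument for reachability and safety games establishes both at once via an attractor computation. I would first invoke the duality already noted in the excerpt: a safety game is a reachability game with the roles of the two players swapped and the winning set complemented, so it suffices to prove the statement for reachability games and then read off the safety case by symmetry. Throughout I treat finite plays ending in a deadlock as wins for the player whose opponent is stuck, exactly as defined above.

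For the reachability condition $\wincon_R$ induced by a target set $\confset_R$, the key construction is the \emph{attractor} of $\confset_R$ for player A. I would define an increasing sequence of sets $\confset_R = W_0 \subseteq W_1 \subseteq \dots$ by transfinite (or, if $\confset$ is finite, ordinary) induction: $W_{\alpha+1}$ adds every configuration $\conf \in \confset_A$ with $\post(\conf) \cap W_\alpha \neq \emptyset$ (player A can move into the attractor) together with every configuration $\conf \in \confset_B$ with $\emptyset \neq \post(\conf) \subseteq W_\alpha$ (every move of player B stays in the attractor), and also every deadlocked $\conf \in \confset_B$ (player A wins immediately); at limit stages take unions. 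Let $W = \bigcup_\alpha W_\alpha$ be the fixed point. The plan is then to show two things. First, from every $\conf \in W$ player A has a positional winning strategy: on her own configurations in $W$ she picks the successor lying in the earliest $W_\alpha$ that witnessed membership, which strictly decreases the stage rank, so any play either reaches $\confset_R = W_0$ in finitely many steps or forces player B into a deadlock. Second, from every $\conf \notin W$ player B has a positional winning strategy: since $\conf$ failed the inductive conditions at every stage, on player-B configurations outside $W$ there is always a successor outside $W$ (otherwise $\conf$ would have been absorbed), and player-A configurations outside $W$ have all their successors outside $W$; hence player B can keep the play forever in $\confset \setminus W \subseteq \confset \setminus \confset_R$, avoiding the target and never deadlocking.

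The two strategies above are positional by construction, and since from any configuration exactly one of the two cases applies (either $\conf \in W$ or $\conf \notin W$), this simultaneously yields determinacy — each configuration is winning for exactly one player — and positionality of the winning strategy. For the safety statement I would apply this result to the dual reachability game: a safety winning strategy for the safety player is just a positional reachability-avoiding strategy for player B in the argument above, so no separate computation is needed. The main obstacle is bookkeeping rather than conceptual: I must handle the deadlock convention carefully and uniformly, making sure that finite plays ending in a player-B deadlock are correctly counted as player-A wins in the attractor (which is why deadlocked $\confset_B$-configurations are seeded into $W_1$), and symmetrically that player B's escape strategy never accidentally moves into a deadlock of its own. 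Verifying that the rank-decrease argument genuinely terminates — giving a \emph{finite} witnessing play reaching $\confset_R$ — is the crux that guarantees player A actually wins rather than merely staying inside $W$ forever.
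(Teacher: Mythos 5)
The paper does not actually prove this lemma: it is imported verbatim from the literature (Proposition 2.21 of the cited survey), and the proof there is exactly the attractor construction you describe. Your argument is correct and matches that standard proof, including the one genuinely delicate point you already flag --- reconciling the attractor's base case with the paper's convention that a finite play ending in a deadlock is lost by the deadlocked player (which the paper later neutralises by assuming no deadlocks are reachable from target states).
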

Since we only consider reachability and safety games in this paper, all strategies will be positional.

\section{Reachability and Safety Games under TSO Semantics}
\label{sec:tso-games}

We model the execution of a TSO program as a game between two players:
The \emph{process player A} takes the role of the program and decides at each execution step which instruction to execute.
The \emph{update player B} is in charge of the buffer message updates in between.

Formally, a TSO program $\program = \tuple{\process^\pid}_{\pid \in \indexset}$ induces a game $\game(\program) = \tuple{ \confset, \confset_A, \confset_B, \to}$ as follows.
The sets $\confset_A$ and $\confset_B$ are copies of the set $\confset^\program$ of TSO configurations, annotated by $A$ and $B$, respectively: $\confset_A := \set{ \conf_A \mid \conf \in \confset^\program}$ and $\confset_B := \set{ \conf_B \mid \conf \in \confset^\program}$.
The transition relation $\to$ is defined by the following rules:
\begin{itemize}
    \item
    \textbf{Program}\quad
    For each transition $\conf \to[\instr_\pid]_\program \conf'$ where $\conf, \conf' \in \confset^\program$, $\pid \in \indexset$ and $\instr \in \instrs$, it holds that $\conf_A \to[\instr_\pid] \conf'_B$.
    This means that the process player can execute any program instruction.
    \item
    \textbf{Update}\quad
    For each transition $\conf_B \in \confset_B$, it holds that $\conf_B \to[\up\kstar] \conf'_A$ for all $\conf'$ with $\conf \to[\up\kstar]_\program \conf'$.
    This means that the update player can update any amount of buffer messages (including zero) between each of the turns of the process player.
\end{itemize}

In the remainder of this work, we will consider reachability or safety winning conditions induced by a set of local states $\stateset_\wincon^\program \subset \stateset^\program$.
The corresponding set of configurations is $\confset_\wincon = \set{ \conf = \tuple{\statemap, \buffermap, \memorymap}_X \mid X \in \set{A,B} \land \exists \pid: \statemap(\pid) \in \stateset_\wincon^\program }$, that is, the set of all configurations where at least one process is in a state of $\stateset_\wincon^\program$.
The set $\confset_\wincon$ can then induce either a reachability or safety winning condition.
In the following, we will assume that the initial configuration (usually named $\conf_0$) is not contained in $\confset_\wincon$, since otherwise the game is decided immediately.
Furthermore, we desire that the process player immediately wins when reaching a target state in a reachability game, that is, we do not care whether the play can be extended infinitely or not.
Formally, we require that in a reachability game, a process cannot deadlock from a target state, implying that the process player cannot lose after reaching it.

\paragraph{Games on Single-Process Programs}
This section introduces games on single-process programs which will help us analysing the general case.
Given a game induced by a concurrent program $\program$, we compare it to the game on just one of the processes of $\program$.
We show that if the process player wins the single-process game, then she also wins the original game.
The main idea is that she achieves this by executing exactly the same instructions in both games.

For the remainder of this section, fix a program $\program = \tuple{\process^\pid}_{\pid \in \indexset}$ and a process index $\pid\in\indexset$.
Let $\program^\pid = \tuple{\process^\pid}$, i.e. the restriction of $\program$ to only the process $\process^\pid$.
Define $\game = \game(\program)$ and $\game^\pid = \game(\program^\pid)$, that is, the games induced by $\program$ and $\program^\pid$, respectively.
Let $\stateset_\wincon$ induce a reachability or safety winning condition for $\game$ and define the winning condition for $\game^\pid$ through $\stateset_\wincon^\pid = \stateset_\wincon \cap \stateset^\pid$.

Now, fix a configuration $\conf_0 \in \confset \setminus \confset_\wincon$ with empty buffers (i.e. $\buffermap(\conf_0) = \tuple\varepsilon_{\pid \in \indexset}$).
For $X \in \set{A,B}$ and a configuration $\conf = \tuple{ \statemap, \buffermap, \memorymap}_X \in \confset_X$, let $\conf\restrict = \tuple{\statemap\of\pid, \buffermap\of\pid, \memorymap}_X \in \confset_X^\pid$, which can be understood as the projection of $\conf$ onto the process $\process^\pid$.
Conversely, for a configuration $\conf^\pid \in \confset_X^\pid$ of $\game^\pid$,
define $\conf^\pid\extend = \tuple{ \statemap(\conf_0)[\pid \gets \statemap(\conf^\pid)], \buffermap(\conf_0)[\pid \gets \buffermap(\conf^\pid)], \memorymap(\conf^\pid) }_X \in \confset_X$,
that is, the configuration of $\game$ which is like $\conf^\pid$ for the process $\process^\pid$, but the local states and buffers of all other processes are as in the initial configuration $\conf_0$.
Note that for all $\conf^\pid$ of $\game^\pid$, it holds that $(\conf^\pid\extend)\restrict = \conf^\pid$.
On the other hand, $(\conf\restrict)\extend = \conf$ only holds for some $\conf$ of $\game$.

\begin{lemma}
\label{lem:extend}
    If the process player wins $\game^\pid$ starting from the configuration $\conf_0^\pid = \conf_0\restrict$, then she also wins $\game$ starting from $\conf_0$.
\end{lemma}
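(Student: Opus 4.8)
The plan is to take a positional winning strategy $\sigma_A^\pid$ for the process player in the single-process game $\game^\pid$ and lift it to a winning strategy $\sigma_A$ in $\game$. The guiding intuition, as the preamble suggests, is that the process player simply ``plays in process $\process^\pid$ only'', ignoring the other processes entirely. Concretely, when the play of $\game$ is in a configuration $\conf_A \in \confset_A$, I would define $\sigma_A(\conf_A) := \bigl(\sigma_A^\pid(\conf_A\restrict)\bigr)\extend$, that is: project down to $\game^\pid$, consult the winning single-process strategy there, and lift the result back up via $\extend$. Since all of player A's moves are instructions $\instr_\pid$ of the single process $\process^\pid$, and these instructions affect only the local state of $\process^\pid$ and the shared memory (never the other processes' states or buffers), this lifted move is a legal transition of $\game$ whenever the projected move is legal in $\game^\pid$.

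\textbf{The simulation invariant.}

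The core of the argument is to maintain, along any play of $\game$ consistent with $\sigma_A$, a correspondence with a play of $\game^\pid$ consistent with $\sigma_A^\pid$. The key technical difficulty is player B: in $\game$ the update player may update buffers of processes \emph{other} than $\process^\pid$, and these moves have no counterpart in $\game^\pid$. I would handle this by showing that such ``foreign'' updates are irrelevant to the outcome. The invariant I would carry is roughly: after each of player A's moves, the current configuration $\conf$ of $\game$ satisfies $\conf\restrict = \conf'$, where $\conf'$ is the corresponding configuration reached in the simulated play of $\game^\pid$. The subtlety is that foreign updates change the memory $\memorymap$ on variables that $\process^\pid$ might read, so I cannot simply ignore them; instead I must argue that whenever player A is about to move, the projection $\conf\restrict$ is a configuration from which $\sigma_A^\pid$ still prescribes a winning continuation. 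The cleanest route is to let player A \emph{re-derive} her move from the current projection at every turn (which the positional definition above already does), so that the simulated $\game^\pid$-play is the sequence of projections $\conf_0^\pid, (\sigma_A^\pid\text{-response}), \dots$ and player B's $\game^\pid$-moves are taken to be the projections of whatever player B does in $\game$ (restricted to updates of $\process^\pid$'s buffer, with foreign updates collapsed to a stay-in-place / $\up\kstar$ of length zero on $\process^\pid$).

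\textbf{Transferring the win.}

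Once the simulation is established, I would finish by case analysis on the type of play. For a \emph{reachability} condition: the winning strategy $\sigma_A^\pid$ guarantees the simulated $\game^\pid$-play reaches $\stateset_\wincon^\pid = \stateset_\wincon \cap \stateset^\pid$; since $\conf\restrict$ puts process $\process^\pid$ in the same local state as the projection, the corresponding $\game$-configuration has $\process^\pid$ in a state of $\stateset_\wincon$, so $\conf \in \confset_\wincon$ and player A wins. For a \emph{safety} condition (recall $\stateset_\wincon^\pid = \stateset_\wincon \cap \stateset^\pid$): I must check that no process enters $\stateset_\wincon$; the simulation controls $\process^\pid$, but I additionally need that the \emph{other} processes never enter $\stateset_\wincon$, which holds because player A never moves them and their local states stay frozen at $\statemap(\conf_0)$, and $\conf_0 \notin \confset_\wincon$ was assumed. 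I would also verify the deadlock bookkeeping: whenever player B has no legal foreign move and the simulated play would have player B deadlock, the parity of the win matches. The main obstacle I anticipate is precisely this treatment of foreign updates and of the possibility that player B deadlocks in $\game$ at a configuration whose projection is not a deadlock in $\game^\pid$ (or vice versa); resolving it requires arguing that extra configurations reachable in $\game$ but not in $\game^\pid$ either keep player A winning or hand player B a deadlock in her own favour, which is where I expect the argument to require the most care.
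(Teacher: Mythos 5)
Your overall plan matches the paper's: lift the single-process strategy via $\sigma_A(\conf) = \sigma_A^\pid(\conf\restrict)\extend$, maintain a step-by-step correspondence between the two plays, and transfer the win by noting that both plays visit the same local states of $\process^\pid$ while the other processes stay frozen at $\statemap(\conf_0)$, which lies outside $\stateset_\wincon$ since $\conf_0 \notin \confset_\wincon$. However, the step you yourself flag as the main obstacle --- the treatment of ``foreign'' updates by player B on the buffers of processes other than $\process^\pid$ --- is left unresolved, and your proposed workaround (collapse foreign updates to a length-zero $\up\kstar$ on $\process^\pid$ and then argue that $\sigma_A^\pid$ ``still prescribes a winning continuation'' from the possibly perturbed projection) would not close the gap: if a foreign update could change the shared memory, the projection $\conf\restrict$ would no longer equal the configuration reached in the simulated $\game^\pid$-play, and you would have no guarantee that the move prescribed by $\sigma_A^\pid$ is even enabled there, let alone winning.

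The missing idea is that foreign updates simply cannot occur. The lemma fixes $\conf_0$ with \emph{all} buffers empty, and under $\sigma_A$ the process player only ever executes instructions of $\process^\pid$, so the buffers of every other process remain empty throughout the play; since the update player can only move messages that are already present in some buffer, her every move acts exclusively on the buffer of $\process^\pid$ and is mirrored exactly by $\sigma_B^\pid(\conf^\pid) = \sigma_B(\conf^\pid\extend)\restrict$ in $\game^\pid$. This yields the exact invariant $\conf_k = \conf_k^\pid\extend$ for all $k$ by induction, with no perturbation of the memory left to argue away. The same observation disposes of your deadlock worry: player B is never deadlocked in either game, because updating zero messages is always a legal $\up\kstar$-move, so both plays are infinite and no parity bookkeeping is needed.
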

\begin{proof}
    Let $\sigma_A^\pid$ be a winning strategy of the process player in $\game^\pid$ and let $\sigma_A$ be the strategy in $\game$ defined by $\sigma_A(\conf) = \sigma_A^\pid(\conf\restrict)\extend$.
    Consider an arbitrary strategy $\sigma_B$ of the update player in $\game$ and let $\sigma_B^\pid$ be the strategy in $\game^\pid$ defined by $\sigma_B^\pid(\conf^\pid) = \sigma_B(\conf^\pid\extend)\restrict$.
    We write $\play = \play(\conf_0, \sigma_A, \sigma_B) := \conf_0, \conf_1, \dots$ and $\play^\pid = \play(\conf_0^\pid, \sigma_A^\pid, \sigma_B^\pid) := \conf_0^\pid, \conf_1^\pid, \dots$ and observe the following.

    First, $\conf_0^\pid\extend = \conf_0$.
    Next, assume $\conf_k^\pid\extend = \conf_k$ for all $k \leq n$ for some $n \in \Nat$.
    An immediate consequence is that $\conf_n\restrict = (\conf_n^\pid\extend)\restrict = \conf_n^\pid$.
    Now, if $\conf_n \in \confset_A$, then $\conf_{n+1} = \sigma_A(\conf_n) = \sigma_A^\pid(\conf_n\restrict)\extend = \sigma_A^\pid(\conf_n^\pid)\extend = \conf_{n+1}^\pid\extend$ by the assumption and the definition of $\sigma_A$.
    Otherwise, if $\conf_n \in \confset_B$, then $\conf_{n+1}^\pid = \sigma_B^\pid(\conf_n^\pid) = \sigma_B(\conf_n^\pid\extend)\restrict = \sigma_B(\conf_n)\restrict = \conf_{n+1}\restrict$.
    Since $\conf_n = \conf_n^\pid\extend$, the buffers of all processes except $\process^\pid$ are empty at $\conf_n$.
    Since the only action of the update player B is to update messages from the memory to the buffer, it follows that $\conf_n$ and $\conf_{n+1}$ must agree on the process states and buffers for the processes other than $\process^\pid$.
    Combined with the previously obtained chain of equalities, we can conclude that $\conf_{n+1}^\pid\extend = (\conf_{n+1}\restrict)\extend = \conf_{n+1}$.
    It follows by induction that $\conf_n^\pid\extend = \conf_n$ for all $n \in \Nat$.

    Since $\sigma_A^\pid$ is winning in $\game^\pid$, the play $\play^\pid = \play(\conf_0^\pid, \sigma_A^\pid, \sigma_B^\pid)$ is winning for player A.
    As $\conf_n^\pid\extend = \conf_n$ for all $n \in \Nat$, we can conclude that $\play^\pid$ and $\play = \play(\conf_0, \sigma_A, \sigma_B)$ visit the exactly same local states of $\process^\pid$.
    Note that the update player can never enter a deadlock, thus both plays must be infinite.
    Furthermore, since $\sigma_B$ was arbitrary, the strategy $\sigma_A$ must be winning for player A in $\game$, starting from $\conf_0$.

    In more detail:
    If $\game$ and $\game^\pid$ are reachability games, then the winning play $\play^\pid$ must visit at least one state in $\stateset_\wincon^\pid$.
    Since this is a subset of $\stateset_\wincon$ and the play $\play$ visits the same local states as $\play^\pid$, $\play$ must also be a winning play.
    On the other hand, if $\game$ and $\game^\pid$ are safety games, then neither $\play^\pid$ nor $\play$ visit any of the states in $\stateset_\wincon^\pid$.
    Since we required $\conf_0 \not\in \confset_\wincon$, the play $\play$ also does not visit any other local state of $\stateset_\wincon$ and must be winning.
\end{proof}

It is easy to see that the converse statement of \autoref{lem:extend} cannot hold for all $\pid \in \indexset$.
Rather, we only show that under certain conditions the process player is able to visit the same local states of a process $\process^1$ in both $\game$ and $\game^\pid$.
The strategy to do so will take a specific play in $\game$ and mimic all instructions that have been played in $\process^\pid$, similar as in the previous proof.

Fix $\pid \in \indexset$.
Let $\sigma_A$ be a winning strategy for the process player and let $\sigma_B$ be the strategy of the update player where she never updates any buffer messages to the memory.
Consider the play $\play(\conf_0, \sigma_A, \sigma_B)$ in $\game$.
For $k = 1, 2, \dots$, let $\bar\conf_k \to \conf_k$ be the $k$-th time in this play where either the local state or the buffer of $\process^\pid$ changes.
This transition is due to some unique instruction $\statemap(\bar\conf_k)(\pid) \to[\instr_k] \statemap(\conf_k)(\pid)$ in $\process^\pid$.
In particular, it cannot be due to a memory update, since $\sigma_B$ was chosen that way.
Note that there does not necessarily need to be an infinite amount of $k$ with this property.
We define the strategy $\sigma^\pid_A$ as follows:
Whenever $\game^\pid$ is in the $k$-th round, the local state of the process is $\statemap(\bar\conf_k)(\pid)$ and $\instr_k$ is enabled, execute this instruction to move to the unique configuration with local state $\statemap(\conf_k)(\pid)$.
Otherwise, make an arbitrary move.
Let $\sigma^\pid_B$ be an arbitrary strategy of the update player for $\game^\pid$.
After the k-th round of $\play(\conf^\pid_0, \sigma^\pid_A, \sigma^\pid_B)$, the game is in some position $\conf^\pid_k$.

\begin{claim}
\label{claim:single-process}
    For all $k \in \Nat$ for which $\conf_k$ is defined, it holds that $\conf_k\restrict \to[\up\kstar] \conf^\pid_k$.
\end{claim}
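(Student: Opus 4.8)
The plan is to prove the statement by induction on $k$, carrying a more explicit reading of the relation than its bare form. Concretely, I will maintain the invariant that $\conf^\pid_k$ is obtained from $\conf_k\restrict$ by flushing a suffix of oldest messages of the $\process^\pid$-buffer into memory — which is exactly what $\conf_k\restrict \to[\up\kstar] \conf^\pid_k$ asserts. Since $\up\kstar$ never touches the local state, this relation forces the local state of $\process^\pid$ to coincide in $\conf_k\restrict$ and $\conf^\pid_k$; this is the fact that guarantees, round after round, that $\sigma^\pid_A$ is in its ``replay'' branch and actually executes $\instr_k$ rather than an arbitrary move.

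Before the induction I would record two features of the reference play $\play(\conf_0, \sigma_A, \sigma_B)$ in $\game$. First, because $\sigma_B$ never updates, the memory stays fixed at $\memorymap\of{\conf_0}$ and the $\process^\pid$-buffer only ever grows (writes prepend a message; reads, skips and fences leave it untouched). Second, between the $(k-1)$-th and $k$-th change of $\process^\pid$ nothing in the projection onto $\process^\pid$ moves, so $\bar\conf_k\restrict = \conf_{k-1}\restrict$ and $\instr_k$ is enabled in $\game$ at $\conf_{k-1}\restrict$. The base case $k=0$ is immediate: $\conf_0\restrict = \conf^\pid_0$, so the relation holds with zero updates.

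The main obstacle is the induction step, and within it the claim that $\instr_k$ is enabled in $\game^\pid$ at $\conf^\pid_{k-1}$. Assuming $\conf_{k-1}\restrict \to[\up\kstar] \conf^\pid_{k-1}$, the buffer of $\conf^\pid_{k-1}$ is the newer part of the buffer of $\conf_{k-1}\restrict$, with the flushed older suffix recorded in memory. I would case on $\instr_k$: skips and writes are unconditionally enabled; a fence is enabled since the buffer of $\process^\pid$ is empty at $\conf_{k-1}\restrict$ and hence empty after flushing. The delicate case is a read $\rd\of\xd$. If the most recent write to $\xvar$ still lies in the retained (newer) part of the buffer, read-own-write returns $\dval$; if it has been flushed, then all writes to $\xvar$ were flushed oldest-first, the last of them being precisely that most recent write of value $\dval$, so memory holds $\dval$ and read-from-memory returns $\dval$. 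Either way (and symmetrically when the read fired from memory already in $\game$) the read is enabled and yields the same $\dval$ as in $\game$, so the local state advances identically and $\sigma^\pid_A$ replays $\instr_k$.

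It then remains to propagate the relation across this round. Writing $\hat\conf^\pid$ for the configuration in $\game^\pid$ reached after $\sigma^\pid_A$ plays $\instr_k$ but before player B updates, I would first show $\conf_k\restrict \to[\up\kstar] \hat\conf^\pid$ by applying $\instr_k$ to both sides of the inductive hypothesis: a write prepends the same message to both buffers while the same old suffix stays flushable, and reads, skips and fences leave buffer and memory unchanged (the fence case even collapsing to an equality of empty-buffer configurations). Since player B then updates $\hat\conf^\pid \to[\up\kstar] \conf^\pid_k$, transitivity of $\up\kstar$ yields $\conf_k\restrict \to[\up\kstar] \conf^\pid_k$, closing the induction.
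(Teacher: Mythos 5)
Your proof is correct and follows essentially the same route as the paper's: induction on $k$, using $\bar\conf_k\restrict = \conf_{k-1}\restrict$, a case analysis on $\instr_k$ to show it is enabled at $\conf^\pid_{k-1}$ (with the fence and read cases being the only nontrivial ones), and then propagating the $\up\kstar$ relation through the executed instruction and the subsequent updates of player B. The extra details you supply — that $\up\kstar$ preserves the local state so $\sigma^\pid_A$ stays in its replay branch, and the explicit split of the read case into read-own-write versus read-from-memory depending on whether the relevant write was flushed — are finer-grained than the paper's wording but substantively the same argument.
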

\begin{proof}
    We prove the claim by induction.
    For the base case $k = 0$ define $\conf^\pid_0 = \conf_0\restrict$.
    In the induction step assume $k > 0$ and consider the game before the $k$-th round, where it is in configuration $\conf^\pid_{k-1}$.
    We claim that $\instr_k$ is enabled at $\conf^\pid_{k-1}$.
    First, note that $\bar\conf_k\restrict = \conf_{k-1}\restrict$ which follows directly from their definitions.
    Then, consider the possible instructions $\instr$.
    If $\instr = \mf$, then this means that the buffer of $\process^\pid$ at $\bar\conf_k$ is empty.
    Together with the induction hypothesis it follows that $\bar\conf_k\restrict = \conf_{k-1}\restrict = \conf^\pid_{k-1}$ and $\mf$ is enabled.
    In the case that $\instr = \rd\of\xd$, it holds for $\bar\conf_k$ that either $\dval$ is the initial value of $\xvar$ and $\process^\pid$ has not written to it yet, or the last write of $\process^\pid$ to $\xvar$ was $\dval$.
    We can easily see that this still holds true after performing any amount of update operations.
    Thus, $\rd\of\xd$ is also enabled at $\conf^\pid_{k-1}$, since $\bar\conf_k\restrict = \conf_{k-1}\restrict \to[\up\kstar] \conf^\pid_{k-1}$.
    In all other cases, there is nothing to show, since the instruction is always enabled.

    By the construction of $\sigma^\pid_A$, the process player executes $\instr_k$ and the game moves from $\conf^\pid_{k-1}$ to a configuration $\tilde\conf^\pid_k$ with $\statemap(\tilde\conf^\pid_k) = \statemap(\conf_k)(\pid)$.
    If $\instr_k$ is a write instruction, then the same message is added to the buffers of both $\game$ and $\game^\pid$.
    Otherwise, the buffers stay the same.
    We conclude that $\conf_k\restrict \to[\up\kstar] \tilde\conf^\pid_k$.
    Next, the update player moves from $\tilde\conf^\pid_k$ to $\conf^\pid_k$, performing some amount of update operations.
    Thus, $\conf_k\restrict \to[\up\kstar] \tilde\conf^\pid_k \to[\up\kstar] \conf^\pid_k$, which concludes the proof by induction.
\end{proof}

\paragraph{Concurrent Games}
We combine the results for single-process games to obtain the following theorem.
\begin{theorem}
\label{thm:reduction}
    The process player wins $\game$ starting from a configuration $\conf_0$ if and only if she also wins $\game^\pid$ starting from configuration $\conf_0\restrict$ for at least one $\pid \in \indexset$.
\end{theorem}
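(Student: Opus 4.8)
The plan is to prove the two implications separately. The ``if'' direction is immediate: if the process player wins $\game^\pid$ from $\conf_0\restrict$ for some $\pid \in \indexset$, then \autoref{lem:extend} hands her a winning strategy in $\game$ from $\conf_0$, and there is nothing more to do. All the work lies in the ``only if'' direction, for which I would reuse the mimicking construction set up for \autoref{claim:single-process}.

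For the forward direction, fix a (positional, by \autoref{lem:positional}) winning strategy $\sigma_A$ of the process player in $\game$ and let $\sigma_B$ be the update strategy that never moves any message from a buffer to the memory. Consider the play $\play(\conf_0, \sigma_A, \sigma_B)$. Since $\sigma_A$ is winning, player A wins this play; and since the update player can always update zero messages, she never deadlocks, so the play cannot end in a deadlock of player B. I would use these two facts to pin down the shape of the play and to pick the witnessing index $\pid$, splitting into the reachability and safety objectives.

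In the reachability case, A winning $\play(\conf_0, \sigma_A, \sigma_B)$ together with B never deadlocking forces the play to visit $\confset_\wincon$; hence some process $\process^\pid$ enters a state of $\stateset_\wincon$, and this first happens at one of the configurations $\conf_k$ at which the state of $\process^\pid$ changes (it cannot happen at $\conf_0 \notin \confset_\wincon$). I would then take $\sigma^\pid_A$ to be the mimicking strategy from the paragraph preceding \autoref{claim:single-process}. Against an arbitrary update strategy $\sigma^\pid_B$, the claim yields $\conf_k\restrict \to[\up\kstar] \conf^\pid_k$; since update transitions leave $\statemap$ untouched, $\process^\pid$ sits in the same state of $\stateset_\wincon^\pid$ at $\conf^\pid_k$, so $\sigma^\pid_A$ reaches the target against every $\sigma^\pid_B$ and is winning in $\game^\pid$. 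In the safety case, A winning the same play means it never visits $\confset_\wincon$, and it must moreover be infinite (again because B cannot deadlock). Since $\indexset$ is finite, the pigeonhole principle gives a fixed process $\process^\pid$ that is moved infinitely often, so its state or buffer changes infinitely often and every $\conf_k$ is defined. For this $\pid$ the mimicking strategy never exhausts its prescription, producing an infinite play; by \autoref{claim:single-process} the local states of $\process^\pid$ visited in $\game^\pid$ coincide with those visited in $\game$, none of which lie in $\stateset_\wincon$, so $\sigma^\pid_A$ is winning in $\game^\pid$.

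The main obstacle I anticipate is the safety case: I must guarantee that the mimicking strategy yields an infinite play that never falls back on an ``arbitrary move'', which could wander into $\stateset_\wincon^\pid$. This is precisely what the pigeonhole choice of a process that acts infinitely often secures, as it makes the sequence $\instr_1, \instr_2, \dots$ infinite so that the mimicking clause of $\sigma^\pid_A$ always applies. A secondary point to verify carefully is that $\conf_k\restrict \to[\up\kstar] \conf^\pid_k$ genuinely transfers the local-state information, which holds because the update rule changes only $\buffermap$ and $\memorymap$ and leaves $\statemap$ fixed, together with $\conf_0 \notin \confset_\wincon$ so that no configuration in $\confset_\wincon$ is visited at the start of either play.
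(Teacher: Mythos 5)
Your proposal is correct and follows essentially the same route as the paper: the ``if'' direction via \autoref{lem:extend}, and the ``only if'' direction by playing the winning strategy against the never-updating update player and transferring the resulting play to $\game^\pid$ via the mimicking strategy and \autoref{claim:single-process}. Your added details (the pigeonhole choice of $\pid$ in the safety case and the observation that $\up$-transitions preserve $\statemap$) are faithful elaborations of steps the paper states more tersely.
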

\begin{proof}
    By \autoref{lem:extend}, if the process player wins $\game^\pid$ for at least one $\pid \in \indexset$, then she also wins $\game$.
    For the other direction, consider the strategies as defined above.
    What is left to show is that $\sigma^\pid_A$ is winning for at least one $\pid \in \indexset$.

    If the process player has a reachability objective, $\play(\conf_0, \sigma_A, \sigma_B)$ visits at least one target state in some process $\process^\pid$.
    Note again that the update player cannot be deadlocked and therefore the play must be infinite.
    From \autoref{claim:single-process} it follows that $\play(\conf^\pid_0, \sigma^\pid_A, \sigma^\pid_B)$ visits the same local states of $\process^\pid$ than $\play(\conf_0, \sigma_A, \sigma_B)$ and in particular, it visits the same target state.
    Since $\sigma^\pid_B$ was chosen arbitrarily, it means that $\sigma^\pid_A$ is a winning strategy.
    Otherwise, if the process player has a safety objective, $\play(\conf_0, \sigma_A, \sigma_B)$ executes an infinite amount of instructions in at least one process $\process^\pid$, but never visits any of its target states.
    Using the same arguments as previously, it follows that $\play(\conf^\pid_0, \sigma^\pid_A, \sigma^\pid_B)$ is also a winning play and $\sigma^\pid_A$ is a winning strategy.
\end{proof}

\autoref{thm:reduction} reduces the reachability problem for games on concurrent programs to the single-process case.
Although the game $\game^\pid$ still has an infinite amount of configurations, many of them are indistinguishable in the sense that they have the same local state and allow the same sequences of instructions to be executed.
\autoref{sec:complexity} formally constructs a finite game that is a so-called \emph{bisimulation} of $\game^\pid$.
Furthermore, it shows that the reachability and safety problems for TSO games are \pspace-complete.

\paragraph{Fairness Conditions}
In the previous section we have seen that the game played under TSO semantics reduces to the analysis of games on single-process programs.
This is somewhat unsatisfying, since those games do not exhibit any concurrent behaviour that arises from the communication between multiple processes.
The underlying reason is the structure of the optimal strategies of the two players:
If the process player wins, it is because she only plays in one single process.
Otherwise, the update player wins by never updating any buffer messages.
Both of these behaviours are not natural in the sense that they will not occur in any reasonable program environment:
We should be able to assume that eventually (1) every buffer message will be updated to the memory and (2) every process will execute an instruction.
In \autoref{sec:update-fairness} and \autoref{sec:process-fairness} we will impose additional restrictions on the two players to enforce this behaviour.

\section{Complexity of Single-Process Games}
\label{sec:complexity}

\paragraph{TSO Views}
In the single-process program $\program^\pid$, there is no communication between different processes.
A read operation of the process $\process^\pid$ on a variable $\xvar$ either reads the initial value from the (shared) memory, or the value of the last write on $\xvar$ done by $\process^\pid$, if such a write operation has happened.
In the latter case, the value of the read operation can come from either the buffer of $\process^\pid$ or directly from the memory.
But a single process cannot distinguish between these two cases.
To be exact, the information that the process can obtain from the buffer and the memory is the value that $\process^\pid$ can read from each variable, and whether the process can execute a memory fence instruction or not.
Together with the local state of $\process^\pid$ at the current configuration, this completely determines the enabled transitions for the process.

We call this concept the \emph{view} of the process on the (concurrent) system and define it formally as a tuple $\view = \tuple{ \statemap, \valuemap, \fencemap }$, where:
\begin{itemize}
    \item $\statemap \in \stateset^\pid$ is the local state of $\process^\pid$.
    \item $\valuemap: \varset \to \valset$ defines which value $\process^\pid$ reads from each variable.
    \item $\fencemap \in \set{ \true, \false }$ represents the possibility to perform a memory fence instruction.
\end{itemize}
Given a view $\view = \tuple{ \statemap, \valuemap, \fencemap }$, we write $\statemap\of\view$, $\valuemap\of\view$ and $\fencemap\of\view$ for the local state $\statemap$, the value state $\valuemap$ and the fence state $\fencemap$ of $\view$, respectively.
The view of a configuration $\conf \in \confset^\pid$ is denoted by $\view\of\conf$ and defined in the following way.
First, $\statemap(\view\of\conf) = \statemap(\conf)$.
For all $\xvar \in \varset$, if $\buffermap(\conf)|_{\set\xvar\times\valset} = \tuple\xd \cdot \word$, then $\valuemap(\view\of\conf)(\xvar) = \dval$.
Otherwise, $\valuemap(\view\of\conf)(\xvar) = \memorymap(\conf)\of\xvar$.
Lastly, $\fencemap(\view\of\conf) = \true$ if and only if $\buffermap(\conf) = \varepsilon$.

For $\conf_1, \conf_2 \in \confset$, if $\view(\conf_1) = \view(\conf_2)$, then we write $\conf_1 \equiv \conf_2$.
In such a case, the process $\process^\pid$ cannot differentiate between $\conf_1$ and $\conf_2$ in the sense that the enabled transitions in both configurations are the same.
This is shown in \autoref{fig:views} and formally captured in \autoref{lem:views}.

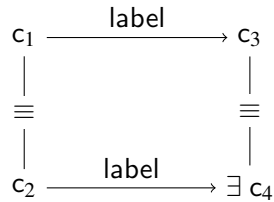
\begin{figure}[h]
\centering
\begin{tikzpicture}[xscale=3, yscale=-2]
    \node at (0,0) (c1) {$\conf_1$};
    \node at (1,0) (c2) {$\conf_3$};
    \node at (0,1) (c3) {$\conf_2$};
    \node at (1,1) (c4) {$\exists\ \conf_4$};

    \draw[->] (c1) -- node[above] {$\lbl$} (c2);
    \draw[->] (c3) -- node[above] {$\lbl$} (c4);
    \draw[- ] (c1) -- node[fill=white] {$\equiv$} (c3);
    \draw[- ] (c2) -- node[fill=white] {$\equiv$} (c4);
\end{tikzpicture}
\caption{The configurations of \autoref{lem:views}.}
\label{fig:views}
\end{figure}

\begin{lemma}
\label{lem:views}
    For all $\conf_1, \conf_3, \conf_2 \in \confset^\pid$ and $\lbl \in \instrs \cup \set{ \up\kstar }$ with $\conf_1 \equiv \conf_2$ and $\conf_1 \to[\lbl] \conf_3$,
    there exists a $\conf_4 \in \confset^\pid$ such that $\conf_3 \equiv \conf_4$ and $\conf_2 \to[\lbl] \conf_4$.
\end{lemma}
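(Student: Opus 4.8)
The plan is to prove the statement by a case distinction on the label $\lbl$, treating the four program instructions and the update action separately. The guiding observation is that for every $\lbl \in \instrs$ both the enabledness of $\lbl$ at a configuration and the view of its successor are completely determined by the view of that configuration. Since $\conf_1 \equiv \conf_2$ means $\view\of{\conf_1} = \view\of{\conf_2}$, this immediately yields that the same move is available at $\conf_2$ and that its successor shares a view with $\conf_3$; in each instruction case I simply take $\conf_4$ to be the canonical successor of $\conf_2$ under $\lbl$.

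First I would dispatch the instruction cases. A $\nop$-transition $\state \to[\nop] \state'$ only rewrites the local state and leaves $\buffermap$ and $\memorymap$ untouched, so $\valuemap$ and $\fencemap$ are inherited unchanged while the common local state advances to $\state'$; hence $\conf_3 \equiv \conf_4$. For $\mf$, enabledness is exactly $\fencemap\of{\view\of\conf} = \true$, which $\conf_1$ and $\conf_2$ share, and again only the local state changes. For $\rd\of\xd$, enabledness reduces to $\valuemap\of{\view\of\conf}\of\xvar = \dval$ together with the local transition $\state \to[\rd\of\xd] \state'$, all read off from the common view, and a read alters neither buffer nor memory. For $\wr\of\xd$ the move is always available, and it deterministically produces the view with local state $\state'$, value $\dval$ at $\xvar$, the old values at all other variables, and $\fencemap = \false$; as $\conf_1$ and $\conf_2$ agreed on those other values, their successors again share a view.

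The substantive case is $\lbl = \up\kstar$, where the buffers of $\conf_1$ and $\conf_2$ may differ considerably despite the equal views, so a step-for-step imitation is impossible and I must instead pick a possibly different number of updates from $\conf_2$. I would first isolate the key invariant: a single update preserves the value map and can change the fence flag only from $\false$ to $\true$. The value-map part is the heart of the argument — updating the oldest message $\tuple\xd$ of $\buffermap\of\pid = \word \cdot \tuple\xd$ sets the memory of $\xvar$ to $\dval$, and one checks that the value the process reads from $\xvar$ is unaffected (if $\word$ still contains an $\xvar$-write, its newest one is untouched; otherwise the process now reads $\dval$ from memory, exactly what it previously read from the buffer), while all other variables and their read values are left alone. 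The fence flag turns $\true$ precisely when an update empties the buffer.

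Granting this invariant, I split on $\fencemap\of{\view\of{\conf_3}}$. If it equals $\false$, then the updates leading to $\conf_3$ did not empty the buffer, so $\view\of{\conf_3} = \view\of{\conf_1} = \view\of{\conf_2}$ and it suffices to take $\conf_4 = \conf_2$ via zero updates. If it equals $\true$, then $\conf_3$ has an empty buffer, and by the invariant $\view\of{\conf_3}$ carries the same value map and local state as $\conf_1$; I then let $\conf_4$ be the configuration obtained from $\conf_2$ by updating its entire buffer to memory, which has an empty buffer ($\fencemap = \true$), the same value map (updates preserve it), and the same local state, so $\conf_3 \equiv \conf_4$. The main obstacle throughout is this value-map invariance under updates, which forces the careful per-variable reasoning about most-recent writes.
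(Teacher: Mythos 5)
Your proof is correct. For the four instruction cases it follows exactly the paper's route: enabledness of $\lbl$ at $\conf_2$ is read off from the shared view (local state for all instructions, $\valuemap$ for reads, $\fencemap$ for fences), and the successor's view is a function of the predecessor's view and the instruction, so the canonical successor $\conf_4$ of $\conf_2$ works. Where you genuinely diverge is the case $\lbl = \up\kstar$: the paper dispatches it in one line by taking $\conf_4 = \conf_2$ (zero updates), whereas you first establish that a single update preserves $\valuemap$ and can only flip $\fencemap$ from $\false$ to $\true$, and then split on $\fencemap\of{\view\of{\conf_3}}$, flushing the entire buffer of $\conf_2$ when $\conf_3$'s buffer is empty. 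Your extra care here is not merely stylistic: if the updates $\conf_1 \to[\up\kstar] \conf_3$ empty the buffer of $\conf_1$ while $\conf_2$'s buffer is non-empty, then $\fencemap\of{\view\of{\conf_3}} = \true \neq \false = \fencemap\of{\view\of{\conf_2}}$, so the paper's choice $\conf_4 = \conf_2$ does not satisfy $\conf_3 \equiv \conf_4$ and one really must match the emptiness of the buffer as you do. The per-variable argument that updating the oldest message $\tuple\xd$ leaves the read value of $\xvar$ (and of every other variable) unchanged is the key invariant the paper leaves implicit, and it is exactly what makes both your subcases go through. In short: same skeleton, but your treatment of the update case is the one that actually closes the argument.
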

\begin{proof}
    If $\lbl = \up\kstar$, this clearly holds for $\conf_4 = \conf_2$.
    Otherwise, we first show that $\lbl \in \instrs$ is enabled at $\conf_2$.
    Since $\conf_1 \equiv \conf_2$, it holds that $\statemap(\conf_1) = \statemap(\conf_2)$.
    Furthermore, if $\lbl = \rd\of\xd$, then $\valuemap(\view(\conf_1))(\xvar) = \valuemap(\view(\conf_2))(\xvar) = \dval$.
    Also, if $\lbl = \mf$, then $\buffermap(\view(\conf_1)) = \varepsilon$ and since $\fencemap(\view(\conf_1)) = \fencemap(\view(\conf_2)) = \true$ it follows that $\buffermap(\view(\conf_2)) = \varepsilon$.
    From these considerations and the definition of the TSO semantics (see \autoref{fig:tso-semantics}), it follows that $\lbl$ is indeed enabled at $\conf_2$.

    Let $\conf_4$ be the unique configuration obtained after executing the transition $\statemap(\conf_1) \to[\lbl] \statemap(\conf_3)$ at $\conf_2$, i.e. $\conf_2 \to[\lbl] \conf_4$ and $\statemap(\conf_4) = \statemap(\conf_3)$.
    If $\lbl = \wr\of\xd$, then $\valuemap(\view(\conf_4)) = \valuemap(\view(\conf_3)) = \valuemap(\view(\conf_1))[\xvar \gets \dval]$
    and $\fencemap(\view(\conf_4)) = \fencemap(\view(\conf_3)) = \false$.
    Otherwise, $\valuemap(\view(\conf_4)) = \valuemap(\view(\conf_3)) = \valuemap(\view(\conf_1))$
    and $\fencemap(\view(\conf_4)) = \fencemap(\view(\conf_3)) = \fencemap(\view(\conf_1))$.
    In all cases it follows that $\conf_3 \equiv \conf_4$.
\end{proof}

\paragraph{Bisimulations}
A \emph{colouring} of a game $\game$ is a function $\lambda: \confset \to \colset$ from the set of configurations into some set of colours $\colset$.
Consider two games $\game$ and $\game'$ with colouring functions $\lambda$ and $\lambda'$, respectively.
A \emph{bisimulation} (also called \emph{zig-zag relation}) between $\game$ and $\game'$ is a relation $\bisim \subseteq \confset \times \confset'$ such that for all pair of related configurations $(\conf_1, \conf_2) \in \bisim$ it holds that (cf. \autoref{fig:bisim}):
\begin{itemize}
    \item $\conf_1$ and $\conf_2$ agree on their colour: $\lambda(\conf_1) = \lambda'(\conf_2)$
    \item (\emph{zig}) for each transition $\conf_1 \to[\lbl] \conf_3$ there is a transition $\conf_2 \to[\lbl] \conf_4$ such that $(\conf_3, \conf_4) \in \bisim$.
    \item (\emph{zag}) for each transition $\conf_2 \to[\lbl] \conf_4$ there is a transition $\conf_1 \to[\lbl] \conf_3$ such that $(\conf_3, \conf_4) \in \bisim$.
\end{itemize}
We say that two related configurations $\conf_1$ and $\conf_2$ are \emph{bisimilar} and write $\conf \approx \conf'$.
We call $\game$ and $\game'$ \emph{bisimilar} if there is a bisimulation between them.

\begin{figure}
\centering
\begin{subfigure}[b]{0.4\linewidth}
\centering
\begin{tikzpicture}[xscale=3, yscale=-2]
    \node at (0,0) (c1) {$\conf_1$};
    \node at (1,0) (c2) {$\conf_3$};
    \node at (0,1) (c3) {$\conf_2$};
    \node at (1,1) (c4) {$\exists\ \conf_4$};

    \draw[->] (c1) -- node[above] {$\lbl$} (c2);
    \draw[->] (c3) -- node[above] {$\lbl$} (c4);
    \draw[- ] (c1) -- node[fill=white] {$\approx$} (c3);
    \draw[- ] (c2) -- node[fill=white] {$\approx$} (c4);
\end{tikzpicture}
\caption{zig property}
\end{subfigure}
\begin{subfigure}[b]{0.4\linewidth}
\centering
\begin{tikzpicture}[xscale=3, yscale=-2]
    \node at (0,0) (c1) {$\conf_1$};
    \node at (1,0) (c2) {$\exists\ \conf_3$};
    \node at (0,1) (c3) {$\conf_2$};
    \node at (1,1) (c4) {$\conf_4$};

    \draw[->] (c1) -- node[above] {$\lbl$} (c2);
    \draw[->] (c3) -- node[above] {$\lbl$} (c4);
    \draw[- ] (c1) -- node[fill=white] {$\approx$} (c3);
    \draw[- ] (c2) -- node[fill=white] {$\approx$} (c4);
\end{tikzpicture}
\caption{zag property}
\end{subfigure}
\caption{Configurations in a bisimulation.}
\label{fig:bisim}
\end{figure}
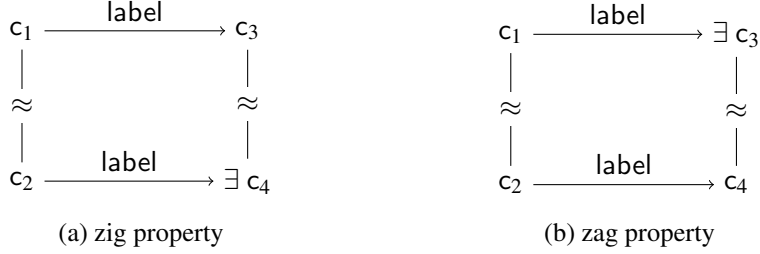

It is common knowledge in game theory that winning strategies are preserved under bisimulations if the colourings are a refinement of the winning condition in the following sense.
Consider two bisimilar games $\game$ and $\game'$ with winning conditions given by $\confset_\wincon$ and $\confset'_\wincon$, respectively.
Let $\lambda$ be a colouring function for $\game$ such that the configurations in $\confset_\wincon$ have different colours than the rest of the configurations, i.e. $\lambda^{-1}(\lambda(\confset_\wincon)) = \confset_\wincon$.
Define $\lambda'$ as a colouring function for $\game'$ accordingly.

\begin{lemma}
    Given two bisimilar configurations $\conf_0 \in \confset$ and $\conf_0' \in \confset'$, it holds that $\conf_0$ is a winning configuration in $\game$ if and only if $\conf_0'$ is a winning configuration in $\game'$.
\end{lemma}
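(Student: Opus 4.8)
We read \emph{winning} as winning for player A; by \autoref{lem:positional} every configuration is winning for exactly one player, so this is unambiguous. The plan is to transfer a winning strategy across the bisimulation by maintaining a \emph{shadow} play in the other game. Since $\bisim^{-1}$ is again a bisimulation (the zig and zag properties merely exchange their roles), the claim is symmetric in $\game$ and $\game'$, so it suffices to prove one implication: if $\conf_0$ is winning for player A in $\game$, then $\conf_0'$ is winning for her in $\game'$. Applying this implication to the pair $(\game', \game)$ with the relation $\bisim^{-1}$ then yields the converse, and hence the stated equivalence.

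Fix a positional winning strategy $\sigma_A$ for player A in $\game$, which exists by \autoref{lem:positional}. I would define a strategy $\sigma_A'$ for player A in $\game'$ that, besides the visible history in $\game'$, keeps track of a \emph{shadow configuration} $\conf \in \confset$ bisimilar to the current configuration $\conf'$ of $\game'$. Initially the shadow of $\conf_0'$ is $\conf_0$, and $\conf_0 \approx \conf_0'$ by hypothesis. Suppose the play in $\game'$ has reached $\conf'$ with shadow $\conf$, where $\conf \approx \conf'$. If $\conf'$ is an A-configuration, let $\conf \to[\lbl] \hat\conf$ be the move prescribed by $\sigma_A$; by the \emph{zig} property there is a transition $\conf' \to[\lbl] \hat\conf'$ with $\hat\conf \approx \hat\conf'$, and $\sigma_A'$ plays this move and records $\hat\conf$ as the new shadow. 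If instead $\conf'$ is a B-configuration and the update player plays $\conf' \to[\lbl] \hat\conf'$, the \emph{zag} property provides a transition $\conf \to[\lbl] \hat\conf$ with $\hat\conf \approx \hat\conf'$, which becomes the new shadow.

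Now fix an arbitrary strategy $\sigma_B'$ of the update player in $\game'$ and consider the induced play $\play' = \conf_0', \conf_1', \dots$. By construction the shadow configurations form a play $\play = \conf_0, \conf_1, \dots$ in $\game$ with $\conf_i \approx \conf_i'$ for every $i$, and $\play$ is consistent with $\sigma_A$: moves out of A-configurations are exactly those prescribed by $\sigma_A$, while moves out of B-configurations are legal $\game$-transitions obtained from zag. Hence $\play$ is won by player A. To transfer this outcome, I would use that $\conf_i \approx \conf_i'$ implies equal colours, and since the colourings refine the winning conditions ($\lambda^{-1}(\lambda(\confset_\wincon)) = \confset_\wincon$ and likewise for $\lambda'$), we obtain $\conf_i' \in \confset'_\wincon \iff \conf_i \in \confset_\wincon$ for all $i$. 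For a reachability condition $\play'$ thus visits $\confset'_\wincon$ exactly when $\play$ visits $\confset_\wincon$, and for a safety condition $\play'$ avoids $\confset'_\wincon$ exactly when $\play$ avoids $\confset_\wincon$. For finite plays one additionally needs that bisimilar configurations are simultaneously deadlocked — if $\conf$ had an outgoing transition, zig would give one for $\conf'$, and symmetrically via zag — and that they belong to the same player, so a deadlock of player B corresponds between the two games. Consequently $\play'$ is won by player A, and since $\sigma_B'$ was arbitrary, $\sigma_A'$ is winning from $\conf_0'$.

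The main obstacle is the bookkeeping of the shadow strategy: $\sigma_A'$ is genuinely history-dependent, as it must remember the shadow configuration, and one must verify that every move of player B in $\game'$ can be mirrored back into $\game$ — precisely the content of the zag property — while player A's prescribed moves carry over through zig. A secondary subtlety is the treatment of finite plays, where preserving the winner of a deadlock requires bisimilar configurations to belong to the same player; this can be ensured by letting the colouring also record whether a configuration belongs to player A or B, which only refines $\lambda$ and therefore preserves the refinement hypothesis.
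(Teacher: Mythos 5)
The paper gives no proof of this lemma at all --- it is stated as ``common knowledge in game theory'' and left implicit --- so there is nothing to compare against line by line. Your argument is the standard strategy-transfer proof (maintain a shadow configuration related by $\bisim$, push player A's moves through \emph{zig} and pull player B's moves back through \emph{zag}, then use the colour refinement to transfer membership in the winning condition pointwise along the play), and it is correct; it is exactly the argument the paper is implicitly appealing to. Your observation about finite plays is a genuine point the paper glosses over: as literally stated, the lemma needs the bisimulation (or the colouring) to respect the partition into $\confset_A$ and $\confset_B$, since otherwise an A-deadlock could be related to a B-deadlock of the same colour and the winners would differ; this holds in the paper's application because the relation $\set{(\conf_X, \view(\conf)_X)}$ preserves the player annotation, and your proposed fix of refining the colouring by the player is the right general remedy.
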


We define a game on views $\game^\viewset = \tuple{ \viewset, \viewset_A, \viewset_B, \to_\view}$ that is bisimilar to the single-process game $\game^\pid$.
Let $\viewset_X = \set{ \view(\conf)_X \mid \conf \in \confset^\pid }$ for $X \in \set{A,B}$ and $\viewset = \viewset_A \cup \viewset_B$.
We extend the notation of the function $\view$ to game configurations by $\view(\conf_X) = \view(\conf)_X$ for $X \in \set{A,B}$ and $\conf \in \confset^\pid$.
Now, we can define $\to_\view$ by $\view(\conf) \to[\lbl]_\view \view(\conf')$ if and only if $\conf \to[\lbl] \conf'$ for some $\conf, \conf' \in \confset^\pid$.

\begin{theorem}
    The relation $\bisim = \set{ (\conf, \view\of\conf) \mid \conf \in \confset^\pid} \subset \confset^\pid \times \confset^\viewset$
    is a bisimulation between $\game^\pid$ and $\game^\viewset$
    with colouring functions $\lambda^\pid: \confset^\pid \to \viewset, \conf \mapsto \view(\conf)$ and $\lambda^\viewset = \operatorname{id}_\viewset$, respectively.
\end{theorem}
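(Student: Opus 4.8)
The plan is to verify the three defining conditions of a bisimulation for every related pair $(\conf, \view(\conf)) \in \bisim$: colour agreement, the \emph{zig} property, and the \emph{zag} property. The colour condition is immediate, since $\lambda^\pid(\conf) = \view(\conf)$ by definition while $\lambda^\viewset = \operatorname{id}_\viewset$ gives $\lambda^\viewset(\view(\conf)) = \view(\conf)$, so both configurations carry the same colour $\view(\conf)$. I would also note at the outset that, because $\view(\conf_X) = \view(\conf)_X$ retains the player annotation $X \in \set{A,B}$, the relation respects the partition of each game into the positions of the two players; hence $\bisim$ really relates the two games and not merely their underlying transition systems.

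For the \emph{zig} property, I would fix a transition $\conf \to[\lbl] \conf_3$ in $\game^\pid$ with $\lbl \in \instrs \cup \set{\up\kstar}$. By the very definition of $\to_\view$ — namely $\view(\conf) \to[\lbl]_\view \view(\conf')$ whenever $\conf \to[\lbl] \conf'$ — we immediately obtain $\view(\conf) \to[\lbl]_\view \view(\conf_3)$. Since $(\conf_3, \view(\conf_3)) \in \bisim$, the view $\view(\conf_3)$ is exactly the required matching target, so this direction is essentially definitional.

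The \emph{zag} property is where the actual work lies, and it is here that I would invoke \autoref{lem:views}. Suppose $\view(\conf) \to[\lbl]_\view \view_4$ in $\game^\viewset$. Unfolding the definition of $\to_\view$ produces configurations $\conf', \conf_3' \in \confset^\pid$ with $\view(\conf') = \view(\conf)$, $\view(\conf_3') = \view_4$, and $\conf' \to[\lbl] \conf_3'$. The equality of views is exactly the statement $\conf \equiv \conf'$. Applying \autoref{lem:views} to $\conf' \equiv \conf$ together with $\conf' \to[\lbl] \conf_3'$ yields a configuration $\conf_3 \in \confset^\pid$ such that $\conf_3' \equiv \conf_3$ and $\conf \to[\lbl] \conf_3$. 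Then $\view(\conf_3) = \view(\conf_3') = \view_4$, so $(\conf_3, \view_4) \in \bisim$ and $\conf \to[\lbl] \conf_3$ is the required transition.

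I expect the zag step to be the only real obstacle, and it reduces entirely to \autoref{lem:views}: that lemma guarantees that both the enabled transitions and the resulting views depend only on the view of a configuration, which is precisely what renders the existential ``for some $\conf, \conf'$'' clause in the definition of $\to_\view$ harmless. Everything else is routine bookkeeping, and the three verified conditions together establish that $\bisim$ is a bisimulation between $\game^\pid$ and $\game^\viewset$.
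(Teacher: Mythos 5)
Your proposal is correct and follows essentially the same route as the paper's own proof: the colour and zig conditions are handled as immediate consequences of the definitions, and the zag condition is discharged by unfolding the existential in the definition of $\to_\view$ and applying \autoref{lem:views} to transfer the transition to the given representative. The only addition is your remark about the player annotation being preserved, which is a harmless (and reasonable) piece of extra bookkeeping.
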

\begin{proof}
    From the definition it follows directly that related configurations agree on their colour and that $\game^\viewset$ can simulate $\game^\pid$.
    What is left to show is that $\game^\pid$ can also simulate $\game^\viewset$, i.e. that for all $\conf \approx \view$ and $\view \to[\lbl]_\view \tilde\view$ there is $\conf \to[\lbl] \tilde\conf$ with $\tilde\conf \approx \tilde\view$.
    The transition $\view \to[\lbl]_\view \tilde\view$ is due to a transition $\altconf \to[\lbl] \tilde\altconf$ for some $\altconf, \tilde\altconf \in \confset^\pid$ with $\altconf \approx \view$ and $\tilde\altconf \approx \tilde\view$.
    Since $\view\of\conf = \view = \view\of\altconf$, it follows that $\conf \equiv \altconf$.
    Apply \autoref{lem:views} to $\conf_1 = \altconf$, $\conf_2 = \conf$ and $\conf_3 = \tilde\altconf$ to obtain a configuration $\tilde\conf = \conf_4$ with the desired properties.
\end{proof}

Since $\confset^\viewset$ is finite, it is rather evident that the reachability and safety problem are decidable, e.g. by applying a backward induction algorithm.
In fact, both problems are $\pspace$-complete.
Intuitively, this makes sense since each variable can be seen as a single cell of a bounded Turing machine.
In the following, we will give a polynomial-space algorithm to show the upper complexity bound and construct a reduction from TQBF for the lower bound.
These results then immediately translate to the single-process TSO game $\game^\pid$, since it is bisimilar to $\game^\viewset$.

\paragraph{\pspace-Membership}
We first examine the possible moves of the update player.
If the program is in some configuration $\view$ where $\fencemap\of\view = \true$, then there is only one valid move which is to stay in the same configuration.
Otherwise, if $\fencemap\of\view = \false$, then the update player can choose to change the value to $\true$, which corresponds to updating all messages from the buffer to the memory, thus enabling memory fence instructions.

Similar to what we have seen in \autoref{sec:tso-games}, we can argue that the update player can follow a very simple strategy:
Since changing the value of $\fencemap\of\view$ from $\false$ to $\true$ never disables any transitions, but might enable some transitions, it is never beneficial for the update player to do so.
More formally, consider a strategy $\sigma_A$ of the process player that wins against the strategy $\sigma_B$ of the update player where she never updates.
We define another strategy by $\sigma'_A(\view) = \sigma_A(\tuple{\statemap\of\view, \valuemap\of\view, \false})$.
This is well-defined since all outgoing transitions at $\tuple{\statemap\of\view, \valuemap\of\view, \false}$ are also enabled at $\view$.
It follows that $\sigma'_A$ wins against any strategy $\sigma'_B$ since $\play(\view_0, \sigma_A, \sigma_B)$ and $\play(\view_0, \sigma'_A, \sigma'_B)$ visit the same local states for any initial view $\view$.
In the following, we will assume that the update player never changes $\fencemap$.
This means that she is a dummy player that only has one available transition at all times.

To solve the reachability problem, we nondeterministically simulate a play step by step.
We store the current configuration along the way, but disregard the history.
Instead, we keep a counter on how many steps we have simulated.
If the simulation runs for more than $\sizeof\viewset$ steps without visiting the set of target states $\viewset_\wincon$, it must have entered a loop and can terminate.
Since
$$ \log(\sizeof\viewset) = \log(\sizeof\stateset \cdot \sizeof\valset^{\sizeof\varset} \cdot 2) = \log(\sizeof\stateset) + \sizeof\varset \cdot \log(\sizeof\valset) + 1 \quad $$
is polynomial in the size of $\game^\viewset$, we can store the counter in polynomial space.
The first time the simulation reaches $\viewset_\wincon$, the counter is reset.
If the simulation then continues for more than $\sizeof\viewset$ steps without entering a deadlock, it entered a loop which can be extended to an infinite winning play.

For the safety problem, we only need to perform the second part of the simulation where we search for a loop.
Of course, this time we need to check that the loop does not visit $\viewset_\wincon$.

We described a nondeterministic polynomial-space algorithm to solve the reachability and safety problem.
Since $\npspace = \pspace$ by Savitch's theorem, we can conclude that both problems are in $\pspace$.

\paragraph{\pspace-Hardness}
\begin{figure}
\centering

\tikzset{io/.style={draw, circle, fill=white, minimum size=21pt}}

\begin{subfigure}[b]{0.9\linewidth}
\centering
\begin{tikzpicture}[yscale=-0.9]
\draw[draw=black] (0,0) rectangle (10,3.5);
    \node[io] at (2,0) (I) {I};
    \node[io] at (8,0) (S) {O};

    \draw[draw=black] (0.5,2) rectangle (9.5,3);
    \node[io] at (2,2) (i) {I};
    \node[io] at (8,2) (s) {O};
    \node at (5,2.5) {$Q_{k+1}\ x_{k+1} \dots$};

    \draw[->] (I) to[bend left ] node[left]  {$\wr(\xvar_k, 0)$} (i);
    \draw[->] (I) to[bend right] node[right] {$\wr(\xvar_k, 1)$} (i);
    \draw[->] (s) -- node[right] {$\nop$} (S);
\end{tikzpicture}
\caption{Gadget for $\exists\ x_k\ Q_{k+1}\ x_{k+1} \dots$.}
\end{subfigure}

\medskip

\begin{subfigure}[b]{0.9\linewidth}
\centering
\begin{tikzpicture}[yscale=-0.9]
    \draw[draw=black] (0,0) rectangle (10,3.5);
    \node[io] at (2,0) (I) {I};
    \node[io] at (8,0) (S) {O};
    \node[io] at (5,1) (h) { };

    \draw[draw=black] (0.5,2) rectangle (9.5,3);
    \node[io] at (2,2) (i) {I};
    \node[io] at (8,2) (s) {O};
    \node at (5,2.5) {$Q_{k+1}\ x_{k+1} \dots$};

    \draw[->] (I) -- node[left]  {$\wr(\xvar_k, 0)$} (i);
    \draw[->] (s) -- node[right] {$\rd(\xvar_k, 1)$} (S);
    \draw[->] (s) to[out=240, in=  0] node[above] {$\rd(\xvar_k, 0)$} (h);
    \draw[->] (h) to[out=180, in=-60] node[above] {$\wr(\xvar_k, 1)$} (i);
\end{tikzpicture}
\caption{Gadget for $\forall\ x_k\ Q_{k+1}\ x_{k+1} \dots$.}
\end{subfigure}

\medskip

\begin{subfigure}[b]{0.9\linewidth}
\centering
\begin{tikzpicture}[yscale=-0.9]
    \draw[draw=black] (0,0) rectangle (10,3.5);
    \node[io] at (2,0) (I) {I};
    \node[io] at (8,0) (S) {O};

    \draw[draw=black] (0.5,2) rectangle (4.5,3);
    \node[io] at (2,2) (i1) {I};
    \node[io] at (3,2) (s1) {O};
    \node at (2.5,2.5) {$\phi_1$};

    \draw[draw=black] (5.5,2) rectangle (9.5,3);
    \node[io] at (7,2) (i2) {I};
    \node[io] at (8,2) (s2) {O};
    \node at (7.5,2.5) {$\phi_2$};

    \draw[->] (I)  -- node[left]        {$\nop$} (i1);
    \draw[->] (I)  -- node[below left]  {$\nop$} (i2);
    \draw[->] (s1) -- node[below right] {$\nop$} (S);
    \draw[->] (s2) -- node[right]       {$\nop$} (S);
\end{tikzpicture}
\caption{Gadget for $\phi = \phi_1 \lor \phi_2$.}
\end{subfigure}

\medskip

\begin{subfigure}[b]{0.9\linewidth}
\centering
\begin{tikzpicture}[yscale=-0.9]
    \draw[draw=black] (0,0) rectangle (10,3.5);
    \node[io] at (2,0) (I) {I};
    \node[io] at (8,0) (S) {O};

    \draw[draw=black] (0.5,2) rectangle (4.5,3);
    \node[io] at (2,2) (i1) {I};
    \node[io] at (3,2) (s1) {O};
    \node at (2.5,2.5) {$\phi_1$};

    \draw[draw=black] (5.5,2) rectangle (9.5,3);
    \node[io] at (7,2) (i2) {I};
    \node[io] at (8,2) (s2) {O};
    \node at (7.5,2.5) {$\phi_2$};

    \draw[->] (I)  -- node[left]        {$\nop$} (i1);
    \draw[->] (s1) to[bend right] node[above] {$\nop$} (i2);
    \draw[->] (s2) -- node[right]       {$\nop$} (S);
\end{tikzpicture}
\caption{Gadget for $\phi = \phi_1 \land \phi_2$.}
\end{subfigure}

\medskip

\begin{subfigure}[b]{0.45\linewidth}
\centering
\begin{tikzpicture}[yscale=-0.9]
    \draw[draw=black] (0,0) rectangle (5,1.5);
    \node[io] at (1,0) (I) {I};
    \node[io] at (4,0) (S) {O};
    \draw[->] (I) to[bend left] node[below] {$\xrd(\xvar_k, 1)$} (S);
\end{tikzpicture}
\caption{Gadget for $\phi = x_k$.}
\end{subfigure}
\hfill
\begin{subfigure}[b]{0.45\linewidth}
\centering
\begin{tikzpicture}[yscale=-0.9]
    \draw[draw=black] (0,0) rectangle (5,1.5);
    \node[io] at (1,0) (I) {I};
    \node[io] at (4,0) (S) {O};
    \draw[->] (I) to[bend left] node[below] {$\xrd(\xvar_k, 0)$} (S);
\end{tikzpicture}
\caption{Gadget for $\phi = \neg x_k$.}
\end{subfigure}

\caption{Reduction from TQBF to TSO games.}
\label{fig:tqbf}
\end{figure}
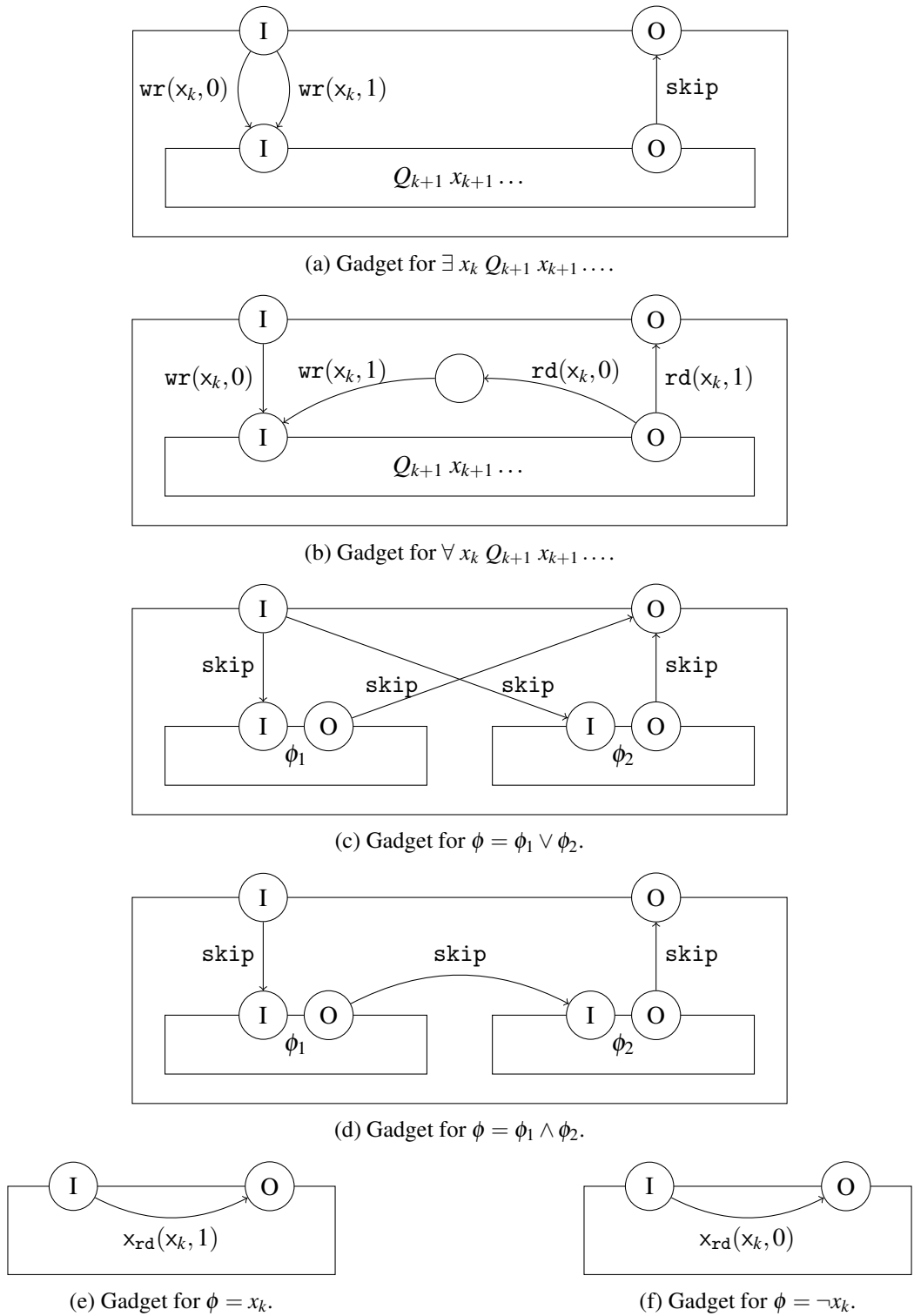

Given a QBF formula $\varphi = Q_1\ x_1\ \dots\ Q_n\ x_n\ \phi(x_1, \dots, x_n)$ where $Q_k \in \set{\exists, \forall}$, we construct a single-process TSO program that can decide whether $\varphi$ is true.
More precisely, the process player wins the view game induced by the program if and only if the QBF formula is true.
The winning condition can be specified both as a reachability or safety objective.

The program consists of one process and $n$ variables $\xvar_1, \dots, \xvar_n$.
For each quantifier $Q_1, \dots, Q_n$ of $\varphi$ and each subformula of $\varphi$, we create a gadget of states and transitions (\autoref{fig:tqbf}).
Every gadget has an initial state (I) and an output state (O) which is reachable by the process player if and only if the corresponding subformula is true given the current (partial) variable assignment.

If $Q_k = \exists$, then the initial state of the gadget for $Q_k$ has two outgoing transitions with labels $\wr(\xvar_k, 0)$ and $\wr(\xvar_k, 1)$.
Both lead to the initial state of the gadget for $Q_{k+1}$, or the gadget of $\phi$ if $k=n$.
Additionally, there is a $\nop$ transition from the output of the successor gadget to the output of the gadget for $Q_k$.
The gadget for $Q_k = \forall$ is similar, but there is only one transition from the initial state to the successor gadget, labelled with $\wr(\xvar_k, 0)$.
After the output gadget of the successor is reached, the gadget for $Q_k$ checks the value of $\xvar_k$.
If it is $0$, the variable is updated to $1$ and the successor gadget is entered again.
Otherwise, the transition leads to the output state.

The gadgets for the subformulas of $\phi$ are as follows.
For $\phi_1 \lor \phi_2$, two $\nop$ transition lead to the inputs of the gadgets for $\phi_1$ and $\phi_2$, respectively.
The output state of both gadgets are connected to the output state of the gadget of $\phi_1 \lor \phi_2$.
The gadget of $\phi_1 \land \phi_2$ contains three $\nop$ transitions:
From its initial state to the input of $\phi_1$, from the output of $\phi_1$ to the input of $\phi_2$ and from the output of $\phi_1$ to the output of $\phi_1 \land \phi_2$.
The gadgets for $x_k$ and $\neg x_k$ only consist of only one transition each, going from their initial state to their output state.
In the first case, the label of the transition is $\rd(\xvar_k, 1)$, in the latter case it is $\rd(\xvar_k, 0)$.

We outline how the game will be played by the process player.
At an existential quantifier, she chooses a value for $x_k$ that satisfies the formula and stores her choice in $\xvar_k$.
At a universal quantifier, she does not have a choice but needs to show that the formula is satisfied with both values for $\xvar_k$, one after the other.
Similarly, in the boolean formula, if she encounters a disjunction, she can choose either subformula, but if she encounters a conjunction, she has to satisfy both of them.
Whenever the player reaches a literal, she checks if the correct value is assigned to the corresponding variable.

By an inductive approach it is straightforward to verify that the program can reach the output state of a gadget if and only if the formula is satisfiable with respect to the current (partial) variable assignment.
We conclude that starting from the initial state of the gadget for $Q_1$, the process player reaches the output state of $Q_1$ if and only if $\varphi$ is true.
Otherwise, the player will reach a deadlock in one of the literal gadgets.

To allow for infinite plays, we add a self-loop to the output state of the first gadget.
We define the winning conditions in a very simple way:
The output state of $Q_1$ induces a reachability condition, while the empty set represents a safety condition.
In both cases, the process player wins if and only if she reaches said state, since it is the only way to enter an infinite play.

Since TQBF is $\pspace$-hard, it follows that both the reachability and safety problem of TSO view games is also $\pspace$-hard.

\section{Update Fairness in Reachability Games}
\label{sec:update-fairness}

In this section, we introduce \emph{update fairness}, which we require the update player to satisfy.
The core idea of update fairness is that eventually, each buffer message will be updated to the shared memory.
This means that the process player can in some sense \emph{wait} for the buffer messages to arrive in the memory.
In safety games, delaying the run indefinitely favours the process player.
Thus, we will focus on reachability games in this section.

We implement update fairness as follows.
Whenever the program is in a configuration at which no program instruction is enabled (a deadlock), the system waits for the update player to update buffer messages to the memory, until the program exits the deadlock (or all buffers are empty).
To simplify the formalisation, we will assume that the update player does so in her very next turn.
This idea can be equivalently expressed by saying that if it is the process player's turn and the system is deadlocked, then it follows that all buffers must be empty.

Let $\program = \tuple{\process^\pid}_{\pid\in\indexset}$ be a TSO program with induced game $\game\of\program$.
We define $\wincon_U$ as the set of all plays $\play = \conf_0, \conf_1, \dots$ in $\game\of\program$ that satisfy update fairness:
$$ \play \in \wincon_U \iff \forall\ k \in \Nat, \conf_k \in \confset_A: \left(
\post(\conf_k) = \emptyset \implies \forall\ \pid\in\indexset: \buffermap(\conf_k)(\pid) = \varepsilon
\right) $$
For a reachability condition $\wincon_R$, let the set $\wincon_{RU} = \wincon_R \cup \overline{\wincon_U} = \set{\play \mid \play \in \wincon_R \lor \play \not\in \wincon_U}$ be the set of winning plays for the process player, i.e. the set of all plays that either reach a target state or that do not admit update fairness.
The remainder of this section will be dedicated to show that the reachability problem under update fairness is undecidable.
We will achieve this by reducing the state reachability problem of perfect channel systems, which is undecidable, to the reachability problem of $\game\of\program$ with respect to $\wincon_{RU}$.
The main ideas of the reduction are similar to those in \cite{DBLP:journals/corr/abs-2310-00990}.

Given a perfect channel system $\channelsystem = \tuple{ \channelstateset, \channelmessageset, \transition }$, we construct a TSO program $\program$ that simulates $\channelsystem$.
The process player will decide which transitions of the PCS to simulate, while the update player only takes care of the buffer updates.
The program consists of two processes $\process^1$ and $\process^2$, which are shown in \autoref{fig:ru-reduction-1} and \autoref{fig:ru-reduction-2}, respectively.

\begin{figure}[tbh]
\centering
\begin{minipage}[b]{.45\textwidth}
    \centering

\begin{subfigure}{\linewidth}
\centering
\begin{tikzpicture}[yscale=-1]
    \node at (0,0) (q1) {$\channelstate$};
    \node at (0,1) (q2) {$\channelstate'$};

    \draw[->] (q1) -- node[right] {$\nop$} (q2);
\end{tikzpicture}
\caption{skip operation $\channelstate \to[\nop]_\channelsystem \channelstate'$}
\end{subfigure}
\bigskip
\bigskip

\begin{subfigure}{\linewidth}
\centering
\begin{tikzpicture}[yscale=-1]
    \node at (0,0) (q1) {$\channelstate$};
    \node at (0,1) (h1) {$\hstate_1$};
    \node at (0,2) (q2) {$\channelstate'$};

    \draw[->] (q1) -- node[right] {$\wr(\xwr,\channelmessage)$} (h1);
    \draw[->] (h1) -- node[right] {$\wr(\yvar,1)$} (q2);
\end{tikzpicture}
\caption{send operation $\channelstate \to[!\channelmessage]_\channelsystem \channelstate'$}
\end{subfigure}
\bigskip
\bigskip

\begin{subfigure}{\linewidth}
\centering
\begin{tikzpicture}[yscale=-1]
    \node at (0,0) (q1) {$\channelstate$};
    \node at (0,1) (h1) {$\hstate_1$};
    \node at (0,2) (h2) {$\hstate_2$};
    \node at (0,3) (q2) {$\channelstate'$};

    \draw[->] (q1) -- node[right] {$\nop$} (h1);
    \draw[->] (h1) -- node[right] {$\rd(\xrd,\channelmessage)$} (h2);
    \draw[->] (h2) -- node[right] {$\rd(\xrd,\bot)$} (q2);
\end{tikzpicture}
\caption{receive operation $\channelstate \to[?\channelmessage]_\channelsystem \channelstate'$}
\label{fig:ru-reduction-1c}
\end{subfigure}
\caption{$\process^1$ of the reduction from PCS.}
\label{fig:ru-reduction-1}
\end{minipage}%
\hfill
\begin{minipage}[b]{.5\textwidth}
    \centering
\begin{tikzpicture}[yscale=-1]
    \node at (0,1) (q1) {$\state_1$};
    \node at (0,2) (q2) {$\state_\channelmessage$};
    \node at (0,3) (q3) {$\state_3$};
    \node at (0,4) (q4) {$\state_4$};
    \node at (0,5) (q5) {$\state_5$};
    \node at (0,6) (q6) {$\state_6$};
    \node at (0,7) (q7) {$\state_7$};
    \node at (0,8) (q8) {$\state_8$};
    \node at (0,9) (q9) {$\state_9$};
    \node at (0,10) (q10) {$\state_{10}$};
    \node at (0,11) (q11) {$\state_{11}$};
    \node at (0,12) (q1') {$\state_1$};

    \draw[->] (q1) -- node[right] {$\rd(\xwr,\channelmessage)$} (q2);
    \draw[->] (q2) -- node[right] {$\wr(\xrd,\channelmessage)$} (q3);
    \draw[->] (q3) -- node[right] {$\wr(\xwr,\bot)$} (q4);
    \draw[->] (q4) -- node[right] {$\mf$} (q5);
    \draw[->] (q5) -- node[right] {$\rd(\yvar,0)$} (q6);
    \draw[->] (q6) -- node[right] {$\rd(\yvar,1)$} (q7);
    \draw[->] (q7) -- node[right] {$\wr(\yvar,0)$} (q8);
    \draw[->] (q8) -- node[right] {$\mf$} (q9);
    \draw[->] (q9) -- node[right] {$\rd(\xwr,\bot)$} (q10);
    \draw[->] (q10) -- node[right] {$\wr(\xrd,\bot)$} (q11);
    \draw[->] (q11) -- node[right] {$\mf$} (q1');

    \node at (5,5) (qf) {$\state_F$};
    \draw[->] (q5) -- node[above right] {$\rd(\yvar,1)$} (qf);

    \node at (5,6) (qf) {$\state_F$};
    \draw[->] (q6) -- node[above right] {$\rd(\yvar,0)$} (qf);

    \node at (5,9) (qf) {$\state_F$};
    \draw[->] (q9) -- node[above right] {$\rd(\xwr,\channelmessage)$} (qf);

    \draw[thick,decoration={brace, mirror},decorate] (q1.south west) -- node[left]{for all $\channelmessage \in \channelmessageset$} (q3.north west);
\end{tikzpicture}
\caption{$\process^2$ of the reduction from PCS.}
\label{fig:ru-reduction-2}
\end{minipage}
\end{figure}

The first process keeps track of the configuration of the channel system and simulates the control flow.
For each transition in $\channelsystem$, we construct a sequence of transitions in $\process^1$ that simulates both the state change and the channel behaviour of the $\channelsystem$-transition.
To achieve this, $\process^1$ uses its buffer to store the messages of the PCS's channel.
In particular, to simulate a send operation $!\channelmessage$, $\process^1$ adds the message $\tuple{\xwr, \channelmessage}$ to its buffer.
For receive operations, $\process^1$ cannot read its own oldest buffer message, since it is overshadowed by the more recent messages.
Thus, the program uses the second process $\process^2$ to read the message from memory and copies it to the variable $\xrd$, where it can be read by $\process^1$.
We call the combination of reading a message $\channelmessage$ from $\xwr$ and writing it to $\xrd$ the \emph{rotation} of $\channelmessage$.

While this is sufficient to simulate all behaviours of the PCS, it also allows for additional behaviour that is not captured by $\channelsystem$.
More precisely, we need to ensure that each channel message is received \emph{once and only once}.
Equivalently, we need to prevent the \emph{loss} and \emph{duplication} of messages.
This can happen due to multiple reasons.

First, the update player might choose to lose a channel message by updating more than one message during a rotation.
Consider an execution of $\program$ that simulates two send operations $!\channelmessage_1$ and $!\channelmessage_2$, i.e. $\process^1$ adds $\tuple{\xwr, \channelmessage_1}$ and $\tuple{\xwr, \channelmessage_2}$ to its buffer.
Now, if the process player wants to simulate a receive operation and initiates a message rotation, the update player can update both messages $\tuple{\xwr, \channelmessage_1}$ and $\tuple{\xwr, \channelmessage_2}$ to the memory before $\process^2$ reads from $\xwr$.
Thus, the first message $\channelmessage_1$ is overwritten by the second message $\channelmessage_2$ and is lost without ever being received.
To prevent this, we implement a protocol that ensures that in each message rotation, exactly one channel message is being updated.

We extend the construction of $\process^1$ such that it inserts an auxiliary message $\tuple{\yvar, 1}$ into its buffer after the simulation of each send operation.
After a message rotation, that is, after $\process^2$ copied a message from $\xwr$ to $\xrd$, the process then resets the value of $\xwr$ to its initial value $\bot$.
Next, the process checks that $\yvar$ contains the value $0$, which indicates that only one message was updated to the memory.
Now, the update player is allowed to update exactly one $\tuple{\yvar, 1}$ buffer message, after which $\process^2$ resets $\yvar$ to $0$.
To ensure that the update player has actually updated only one message in this step, $\process^2$ then checks that $\xwr$ is still empty.
If this protocol is violated at any point, $\process^2$ enables the process player to immediately move to a winning state.

Although we have established that during each message rotation exactly one channel message will be rotated, we also need to ensure that for each rotation, $\process^1$ will simulate exactly one receive operation.
This is achieved by another protocol between $\process^1$ and $\process^2$, which gives the update player the tools to enforce correct behaviour.
To begin, the process player needs to initiate the simulation of a receive operation by moving to the first auxiliary state $\hstate_1$ shown in \autoref{fig:ru-reduction-1c}.
Only then is the program in a deadlock and the update player is forced to perform a message update.
When reaching the first memory fence in $\process^2$, the system is deadlocked again.
Of course, the update player will not update the next message in the buffer of $\process^1$, since it will lead to the process player immediately winning later on.
Thus, she updates the message $\tuple{\xrd, \channelmessage}$ to the memory, which enables both processes to continue.
The next time that the update player is forced to update is when $\process^1$ reaches the second auxiliary state $\hstate_2$ and $\process^2$ reaches the second memory fence.
Only emptying the buffer of $\process^2$ allows the program to continue.
After three more instructions, $\process^2$ will reach the third memory fence.
Again, the update player needs to empty the buffer of $\process^2$ which updates $\tuple{\xrd, \bot}$ and enables $\process^1$ to finish the simulation of the receive operation.

This concludes the mechanisms implemented to ensure that each channel message is received \emph{once and only once}.
We have constructed a TSO game with update fairness that simulates a perfect channel system.
The winning condition of the game will be the reachability condition induced by the final states of the PCS together with the update fairness condition.
We summarise our results in the following theorem.

\begin{theorem}
    The reachability problem for TSO games with update fairness is undecidable.
\end{theorem}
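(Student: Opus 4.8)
The plan is to reduce the (undecidable) state reachability problem of perfect channel systems to the reachability problem for TSO games under update fairness, using the two-process program $\program = \tuple{\process^1, \process^2}$ constructed above. Given a PCS $\channelsystem = \tuple{\channelstateset, \channelmessageset, \transition}$ with final state $\channelstate_F$, I would take the winning condition $\wincon_{RU} = \wincon_R \cup \overline{\wincon_U}$, where $\wincon_R$ is induced by the local states of $\process^1$ corresponding to $\channelstate_F$ together with the violation states $\state_F$ of $\process^2$, and prove that the process player wins $\game\of\program$ from the initial configuration if and only if $\channelstate_F$ is reachable in $\channelsystem$. The backbone is a simulation invariant maintained at the \emph{stable} points of a play --- those where $\process^1$ sits in a PCS-state $\channelstate$, $\process^2$ sits in $\state_1$, and $\memorymap\of\xwr = \memorymap\of\xrd = \bot$ with $\memorymap\of\yvar = 0$ --- asserting that the sequence of $\tuple{\xwr,\cdot}$-messages in the buffer of $\process^1$ (read oldest-to-newest, ignoring the auxiliary $\tuple{\yvar,1}$ markers) equals the channel content of the associated PCS configuration. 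I would check that this invariant is re-established after simulating each PCS operation: it is immediate for $\nop$ and $!\channelmessage$ from the transitions of $\process^1$, while for $?\channelmessage$ it hinges on the rotation performed jointly by $\process^1$ and $\process^2$.

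For completeness, suppose $\channelstate_F$ is reachable in $\channelsystem$. The process player mirrors a witnessing run step by step: for skip and send she simply executes the corresponding instructions of $\process^1$, and for a receive she drives $\process^1$ into the auxiliary state $\hstate_1$, which deadlocks the whole program and, by update fairness, forces the update player to update a buffer message on her next turn. The crucial point is that any deviation from the intended single-message rotation is punished: updating too many messages is caught by the $\yvar$- and $\xwr$-checks of $\process^2$ (the guard transitions into $\state_F$), so the process player wins via $\wincon_R$; updating too few leaves a non-empty buffer at a process-player deadlock, violating update fairness and hence winning via $\overline{\wincon_U}$. If the update player cooperates, the invariant is preserved and $\process^1$ eventually reaches the $\channelstate_F$-state, again a win via $\wincon_R$.

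For soundness I would use a distinguished update strategy $\sigma_B^{\star}$ that always performs the faithful single-message rotation dictated by the protocol and flushes the buffer of $\process^2$ at each memory fence. This strategy satisfies update fairness, so against it the process player can never win via $\overline{\wincon_U}$ and must win via $\wincon_R$. Because $\sigma_B^{\star}$ respects the protocol, the violation states $\state_F$ are never entered, so the only reachable target is the $\channelstate_F$-state of $\process^1$; by the invariant, the resulting play projects onto a legal run of $\channelsystem$ ending in $\channelstate_F$, establishing reachability.

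The main obstacle I anticipate is making the ``once and only once'' delivery property fully rigorous: one must show, by an exhaustive case analysis of each rotation, that the protocol between $\process^1$ and $\process^2$ precludes both loss and duplication of channel messages under \emph{every} update-player behaviour, not merely the faithful one. Concretely, the delicate part is certifying that the guard transitions into $\state_F$ fire on exactly those rotations in which more than one $\xwr$-message reaches memory, while under-updating is invariably exposed as a non-empty-buffer deadlock; pinning down that these two mechanisms are jointly exhaustive is where the real work lies.
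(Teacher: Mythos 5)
Your proposal follows essentially the same route as the paper: the same reduction from PCS state reachability using the two-process program, the same use of deadlocks plus update fairness to force single-message rotations, and the same punishment mechanisms (the $\yvar$/$\xwr$ guards into $\state_F$ for over-updating, fairness violation for under-updating). The paper's own argument is in fact less structured than yours --- it stops at an informal description of the protocol --- so your explicit stable-point invariant and the soundness/completeness split via a faithful update strategy are a reasonable way to make the same construction rigorous.
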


\section{Process Fairness in Safety Games}
\label{sec:process-fairness}

In the previous section, we have limited the behaviour of the update player.
Now, we introduce \emph{process fairness}, which will impact the capabilities of the process player.
Process fairness means that for each process that is enabled infinitely many times during a run, the process player executes an instruction in that process infinitely often.
In reachability games, this is no real restriction to the process player:
If she can reach the set of winning states, then she can do so in finitely many moves.
Thus, any finite prefix of a play that reaches a winning state can then trivially be extended to an infinite play that admits process fairness.
Because of this, we will target our attention only towards safety games, where the process player cannot win in a finite amount of moves.

We formalise process fairness as follows.
Let $\program = \tuple{\process^\pid}_{\pid\in\indexset}$ be a TSO program with induced game $\game\of\program$.
We define $\wincon_P$ as the set of all plays $\play = \conf_0, \conf_1, \dots$ in $\game\of\program$ that satisfy process fairness:
$$ \play \in \wincon_P \iff \forall\ \pid \in \indexset: \left(
\exists^\infty\ k \in \Nat, \conf_k \in \confset_A: \conf_k \to[\instr_\pid] \conf'
\implies
\exists^\infty\ k' \in \Nat, \conf_{k'} \in \confset_A: \conf_{k'} \to[\instr'_\pid] \conf_{k'+1}
\right) $$
Given a safety condition $\wincon_S$, the intersection $\wincon_{SP} = \wincon_S \cap \wincon_P$ defines the set of winning plays that admit process fairness.
In the remainder of this section we will show that the safety problem under process fairness is undecidable.
To do so, we use a construction very similar to the one from the previous section to reduce the state reachability problem of perfect channel systems, to the safety problem of $\game\of\program$.
Before, it was the process player who decided which transition of the channel system to simulate.
This time, it will be the update player who has this task.

Consider again a perfect channel system $\channelsystem = \tuple{ \channelstateset, \channelmessageset, \transition }$.
We modify the construction from the previous section.
First, we introduce another shared variable $\zvar$.
Then, for each transition $\channeledge \in \transition$ of the perfect channel system, we add an auxiliary process $\process^\channeledge$.
It consists of exactly one state $\state_\channeledge$ and one looping transition $\state_\channeledge \to[\wr(\zvar, \channeledge)] \state_\channeledge$.
Furthermore, we prepend the gadget of process $\process^1$ that simulates $\channeledge$ with a transition $\rd(\zvar, \channeledge)$.
The result of this is shown in \autoref{fig:sp-reduction-1}.
Process $\process^2$ is taken from the previous construction without any changes and can be found in \autoref{fig:ru-reduction-2}.

\begin{figure}
\centering

\begin{subfigure}[b]{0.3\linewidth}
\centering
\begin{tikzpicture}[yscale=-1]
    \node at (0,-1) {};
    \node at (0,0) (q1) {$\channelstate$};
    \node at (0,1) (q2) {$\channelstate'$};
    \node[transparent] at (0,2) {$\channelstate$};

    \draw[->] (q1) -- node[right] {$\rd(\zvar, \channeledge)$} (q2);
\end{tikzpicture}
\bigskip
\multicaption{skip operation}{$\channeledge = \channelstate \to[\nop]_\channelsystem \channelstate' \in \transition$}
\end{subfigure}
\hfill
%
\begin{subfigure}[b]{0.3\linewidth}
\centering
\begin{tikzpicture}[yscale=-1]
    \node at (0,0) (q1) {$\channelstate$};
    \node at (0,1) (h1) {$\hstate_1$};
    \node at (0,2) (h2) {$\hstate_2$};
    \node at (0,3) (q2) {$\channelstate'$};

    \draw[->] (q1) -- node[right] {$\rd(\zvar, \channeledge)$} (h1);
    \draw[->] (h1) -- node[right] {$\wr(\xwr,\channelmessage)$} (h2);
    \draw[->] (h2) -- node[right] {$\wr(\yvar,1)$} (q2);
\end{tikzpicture}
\bigskip
\multicaption{send operation}{$\channeledge = \channelstate \to[!\channelmessage]_\channelsystem \channelstate' \in \transition$}
\end{subfigure}
\hfill
%
\begin{subfigure}[b]{0.3\linewidth}
\centering
\begin{tikzpicture}[yscale=-1]
    \node at (0,0) (q1) {$\channelstate$};
    \node at (0,1) (h1) {$\hstate_1$};
    \node at (0,2) (h2) {$\hstate_2$};
    \node at (0,3) (q2) {$\channelstate'$};

    \draw[->] (q1) -- node[right] {$\rd(\zvar, \channeledge)$} (h1);
    \draw[->] (h1) -- node[right] {$\rd(\xrd,\channelmessage)$} (h2);
    \draw[->] (h2) -- node[right] {$\rd(\xrd,\bot)$} (q2);
\end{tikzpicture}
\bigskip
\multicaption{receive operation}{$\channeledge = \channelstate \to[?\channelmessage]_\channelsystem \channelstate' \in \transition$}
\label{fig:sp-reduction-1c}
\end{subfigure}

\caption{$\process^1$ of the reduction from PCS to a TSO game with process fairness.}
\label{fig:sp-reduction-1}
\end{figure}
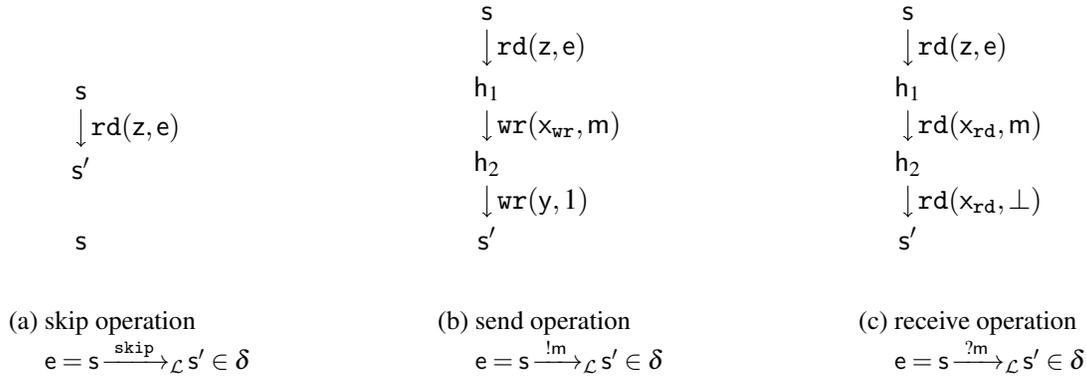

The main idea of these modifications is that the update player can use the variable $\zvar$ to control which channel operation will be simulated.
At the start of the run, both $\zvar$ and $\xwr$ contain the initial value $\bot$, which means that neither $\process^1$ nor $\process^2$ are enabled.
Thus, the process player needs to begin playing in some process $\process^\channeledge$, writing the message $\tuple{\zvar, \channeledge}$ to its buffer.
This will continue until the update player decides to update one of these messages.
But, due to process fairness, the process player is forced to eventually play in \emph{all} enabled processes.
In particular, she has to do so infinitely many times during any infinite play.
This means that the update player can simply wait until each transition $\channeledge \in \transition$ was sufficiently many times added to the buffer of $\process^\channeledge$ to simulate a run of the PCS that reaches a final state.
At that point, the update player starts updating the messages $\tuple{\zvar, \channeledge}$ one by one, each time waiting until the process player has finished simulating the unique  operation that is enabled in $\process^1$.

In more detail, to simulate the execution of a channel operation $\channeledge \in \transition$, the update player updates the buffer message $\tuple{\zvar, \channeledge}$ to the memory.
Due to process fairness, we know that the process player eventually has to play in $\process^1$, since it is now enabled.
She takes the transition $\rd(\zvar, \channeledge)$ and then proceeds (although not necessarily immediately) with simulating $\channeledge$ as was presented for the reachability case in the previous section.
If $\channeledge$ is a receive operation, the update player has to update a $\tuple{\xwr, \channelmessage}$ buffer message at some point to enable $\process^2$ and start a message rotation.
Again, process fairness forces the process player to eventually finish the rotation protocol.
In any case, the simulation of the channel operation is guaranteed to terminate after finitely many steps.
Now, the update player starts the next simulation by updating the corresponding buffer message.

Also in this construction we need to ensure that we do not introduce any behaviour that does not correspond exactly to what the PCS can do.
Message loss due to updating two messages without rotation in between is handled in the same way as previously, using the auxiliary variable $\yvar$.
The same goes for message duplication, which is covered by the protocol between $\process^1$ and $\process^2$.
What is left is message loss due to performing two rotations without simulating a receive operation.
In the previous section, this could only happen if the process player decides to do so, since she was the one controlling the simulation.
This is still prevented by the aforementioned protocol:
The update player is not forced to let $\process^2$ proceed beyond the second and third memory fences before $\process^1$ keeps up with the protocol.
But in this construction, the roles and capabilities of the two players have changed slightly and allow for additional behaviour:
Without enabling a receive operation in $\process^1$, the update player could update a message $\tuple{\xwr, \channelmessage}$, which enables $\process^2$ instead.
Due to process fairness, the process player would eventually have to perform a full message rotation.

We prevent this by adding for every channel message $\channelmessage$ a transition $\state \to[\rd(\xrd, \channelmessage)] \state_F$ to \emph{every} state $\state$ of $\process^1$ \emph{except} the states $\hstate_1$ and $\hstate_2$ (cf. \autoref{fig:sp-reduction-1c}) of the receive operations of message $\channelmessage$.
Here, $\state_F$ is a sink state which is safe for the process player and blocks the update player from winning the game.
The idea of this transition is that during a message rotation, the value of $\xrd$ in the memory is $\channelmessage$.
Thus, the update player is not immediately losing only if $\process^1$ is currently in one of the intermediate states of a receive operation.
In particular, the process player can now move to $\hstate_2$.
Then, after the rotation has finished with the third memory fence, the variable $\xrd$ contains the value $\bot$ again, which means that $\process^1$ is enabled and the process player can finish the simulation of the receive operation.
We conclude that she can prevent the update player from performing a rotation without simulating a receive operation.

In summary, we have shown again that each channel message is read once and only once.
The winning condition of the TSO game is given by the safety condition induced by the final states of the PCS together with the process fairness condition.
This gives rise to the following theorem.
\begin{theorem}
    The safety problem for TSO games with process fairness is undecidable.
\end{theorem}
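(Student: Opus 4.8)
The plan is to formalise the reduction sketched in the preceding construction, proving that the final state $\channelstate_F$ is reachable in the perfect channel system $\channelsystem$ if and only if the process player does \emph{not} win the safety game $\game\of\program$ with respect to $\wincon_{SP}$ (equivalently, the update player wins). Since $\stateset_\wincon = \set{\channelstate_F}$ induces the safety set $\confset_\wincon$ of configurations in which $\process^1$ sits in $\channelstate_F$, and $\wincon_{SP} = \wincon_S \cap \wincon_P$, the process player wins precisely those plays that are both safe (never enter $\channelstate_F$) and process-fair. Because PCS state reachability is undecidable, undecidability of the safety problem follows once the equivalence is established. The technical backbone is a \emph{faithfulness invariant}: along any play in which neither player has conceded, the sequence of $\tuple{\xwr,\cdot}$ messages in the buffer of $\process^1$, together with the memory and the current phase of the rotation protocol, encodes a reachable PCS configuration, and each channel message is rotated once and only once.

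For the direction ``$\channelstate_F$ reachable $\implies$ update player wins'', I would fix a finite witnessing run $\channelstate_0 \to[\channeledge_1] \dots \to[\channeledge_m] \channelstate_F$ of $\channelsystem$. The update player first waits while process fairness forces the process player to deposit messages $\tuple{\zvar,\channeledge}$ in every perpetually-enabled auxiliary process $\process^\channeledge$; she then releases $\tuple{\zvar,\channeledge_1},\dots,\tuple{\zvar,\channeledge_m}$ to memory in exactly this order. Each release enables the guard $\rd(\zvar,\channeledge_j)$ in $\process^1$, so $\process^1$ becomes enabled infinitely often and process fairness compels the process player to execute the transition and complete the corresponding gadget (in the receive case, fairness likewise forces the rotation protocol with $\process^2$ to terminate). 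By the faithfulness invariant the simulation tracks the witnessing run, so $\process^1$ eventually reaches $\channelstate_F$; once $\confset_\wincon$ is visited the safety condition is violated and the update player wins, irrespective of the continuation.

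For the converse ``$\channelstate_F$ not reachable $\implies$ process player wins'', I would equip the process player with the strategy of (i) always playing fairly in the always-enabled processes $\process^\channeledge$, (ii) faithfully simulating in $\process^1$ whichever operation the update player enables via $\zvar$, and (iii) escaping to the safe sink $\state_F$ the moment the update player cheats. The three cheating modes are neutralised as in the construction: updating two messages in a single rotation is caught by the $\yvar$-mechanism, message duplication by the fence protocol between $\process^1$ and $\process^2$, and — the genuinely new case arising from the swapped roles — a rotation begun without a pending receive is punished by the transitions $\state \to[\rd(\xrd,\channelmessage)] \state_F$ present at every state other than $\hstate_1,\hstate_2$. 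Since the PCS never reaches $\channelstate_F$, a faithful simulation never drives $\process^1$ into $\channelstate_F$, and any dishonest rotation lets the process player reach $\state_F$, which is safe and permanently blocks the update player. In either case the play is safe, and I would verify it can be kept process-fair (discharging residual obligations via the always-enabled $\process^\channeledge$ or the self-loop at $\state_F$), so it lies in $\wincon_{SP}$ and the process player wins.

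The main obstacle, as in every ``once and only once'' simulation, is the faithfulness invariant itself: one must check that the interlocking protocols really do force exactly one message update per rotation and exactly one simulated receive per rotation, now under the extra freedom the update player gains from driving the simulation. The most delicate point is the interaction of process fairness with safety in the converse direction — arguing simultaneously that fairness forces the process player to keep simulating (so she cannot trivially stay safe by refusing to move), that honest play never forces her into $\channelstate_F$, and that her escape to $\state_F$ is available \emph{exactly} when the update player attempts a dishonest rotation, i.e. precisely when $\xrd$ holds a message $\channelmessage$ in memory while $\process^1$ is not in the matching states $\hstate_1,\hstate_2$.
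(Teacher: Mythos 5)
Your proposal follows essentially the same route as the paper: a reduction from PCS state reachability in which the update player drives the simulation via the $\zvar$-guards, process fairness compels the process player to keep simulating, and the $\yvar$-mechanism, the fence protocol, and the escape transitions $\state \to[\rd(\xrd,\channelmessage)] \state_F$ enforce the once-and-only-once invariant. The two directions you outline, and the delicate points you flag, match the paper's argument.
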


\section{Load Buffer Semantics in TSO Games}

In \cite{DBLP:journals/lmcs/AbdullaABN18}, the authors introduced an alternative semantics for TSO, called \emph{load buffer semantics}.
It is equivalent to the traditional \emph{store buffer semantics} in the sense that a global state $\statemap$ of the system is reachable under load-buffer semantics if and only if it is reachable under store buffer semantics.
The alternative semantics have been proven to be useful in efficiently performing algorithmic verification or presenting simpler decidability proofs of safety properties.
A natural question in the context of this paper is to ask whether these results transfer to the game setting.
In particular, we want to know if a game is won by the same player when played under both semantics.
Unfortunately, it turns out that this is not the case.

\paragraph{Load Buffer Semantics}
Under the new semantics, the store buffer between each process and the shared memory is replaced by a load buffer instead.
This means that the information flow reverses its direction:
Instead of write operations, the buffer now contains potential \emph{read} operations that \emph{might} be performed by the process.
Each buffer message is either a pair $\tuple\xd$ or a triple $\tuple{\xd,\own}$, where the latter is called an \emph{own-message}.

At any point during the run, the system can nondeterministically choose a variable $\xvar$ and its corresponding value $\dval$ from the memory and add a message $\tuple\xd$ to the tail of the buffer of one of the processes.
This is called \emph{read propagation} and speculates on a future read operation on $\xvar$.
Conversely, a \emph{delete} operation removes the oldest message at the head of the buffer of some process and is also performed nondeterministically at any time.

A \emph{write} instruction $\wr\of\xd$ of a process $\process$ immediately updates the value $\dval$ of the variable $\xvar$ in the memory.
Then, it adds the own-message $\tuple{\xd,\own}$ to the buffer of $\process$.
The behaviour of a \emph{read} instruction $\rd\of\xd$ depends on the contents of the buffer.
If there is an own-message on the variable $\xvar$, then the most recent one must correspond to the value $\dval$.
Otherwise, if there is no such message, the head of the buffer must be a message $\tuple{\xd}$.
If this is not the case, the read instruction is disabled.
The last two instructions, which are \emph{skip} and \emph{memory fence}, work exactly as in the classical TSO semantics:
They only change the local state but not the memory or buffer, and the fence is only enabled if its buffer is empty.

For a formal definition of the semantics and the configurations of the induced transition system, we refer to \cite{DBLP:journals/lmcs/AbdullaABN18}.

\paragraph{Games}
Comparing the alternative to the classical TSO semantics, we see that the order in which the variables in the memory are updated, now depends directly on the order of execution of the corresponding write instructions, and not on the update order of the buffer messages.
Conversely, the buffer does not delay the time when a write operation arrives at the memory, but instead delays when the change in the memory is visible to each process.
Intuitively, we can already guess that the two semantics differ in the game setting, since the information available to the two players during the execution is different in both cases.

In the plain TSO game without fairness, there is actually no change.
Since both reachability and safety games degenerate to single-process games with no communication between processes, the exact update semantics do not matter.

\begin{figure}
\centering

\begin{subfigure}[b]{0.2\linewidth}
\centering
\begin{tikzpicture}[yscale=-1]
    \node at (0,-1) {};
    \node at (0,0) (q1) {$\state_1$};
    \node at (0,1) (q2) {$\state_2$};
    \node[transparent] at (0,2) {$\state$};

    \draw[->] (q1) -- node[right] {$\wr(\xvar, 1)$} (q2);
\end{tikzpicture}
\bigskip
\caption{$\process^1$}
\end{subfigure}
\hfill
\begin{subfigure}[b]{0.2\linewidth}
\centering
\begin{tikzpicture}[yscale=-1]
    \node at (0,-1) {};
    \node at (0,0) (q1) {$\state_1$};
    \node at (0,1) (q2) {$\state_2$};
    \node[transparent] at (0,2) {$\state$};

    \draw[->] (q1) -- node[right] {$\wr(\xvar, 2)$} (q2);
\end{tikzpicture}
\bigskip
\caption{$\process^2$}
\end{subfigure}
\hfill
\begin{subfigure}[b]{0.5\linewidth}
\centering
\begin{tikzpicture}[xscale=1.5,yscale=-2]
    \node at ( 0,0) (q1) {$\state_1$};
    \node at (-1,1) (l1) {$\state_2$};
    \node at ( 1,1) (r1) {$\state_3$};
    \node at (-1,2) (l2) {$\state_4$};
    \node at ( 1,2) (r2) {$\state_5$};
    \node at ( 0,3) (q2) {$\state_F$};

    \draw[->] (q1) -- node[above left]  {$\nop$} (l1);
    \draw[->] (q1) -- node[above right] {$\nop$} (r1);
    \draw[->] (l1) to[out=180,in=240,looseness=8] node[left ,align=left] {$\rd(\xvar,0)$\\$\rd(\xvar,2)$} (l1);
    \draw[->] (r1) to[out=  0,in=-60,looseness=8] node[right,align=left] {$\rd(\xvar,0)$\\$\rd(\xvar,1)$} (r1);
    \draw[->] (l1) -- node[left]  {$\rd(\xvar,1)$} (l2);
    \draw[->] (r1) -- node[right] {$\rd(\xvar,2)$} (r2);
    \draw[->] (l2) to[out=120,in=180,looseness=8] node[left ] {$\rd(\xvar,1)$} (l2);
    \draw[->] (r2) to[out= 60,in=  0,looseness=8] node[right] {$\rd(\xvar,2)$} (r2);
    \draw[->] (l2) -- node[below left]  {$\rd(\xvar,2)$} (q2);
    \draw[->] (r2) -- node[below right] {$\rd(\xvar,1)$} (q2);
\end{tikzpicture}
\bigskip
\caption{$\process^3$}
\end{subfigure}

\caption{A concurrent program consisting of three processes.}
\label{fig:load-buffer-sp}
\end{figure}
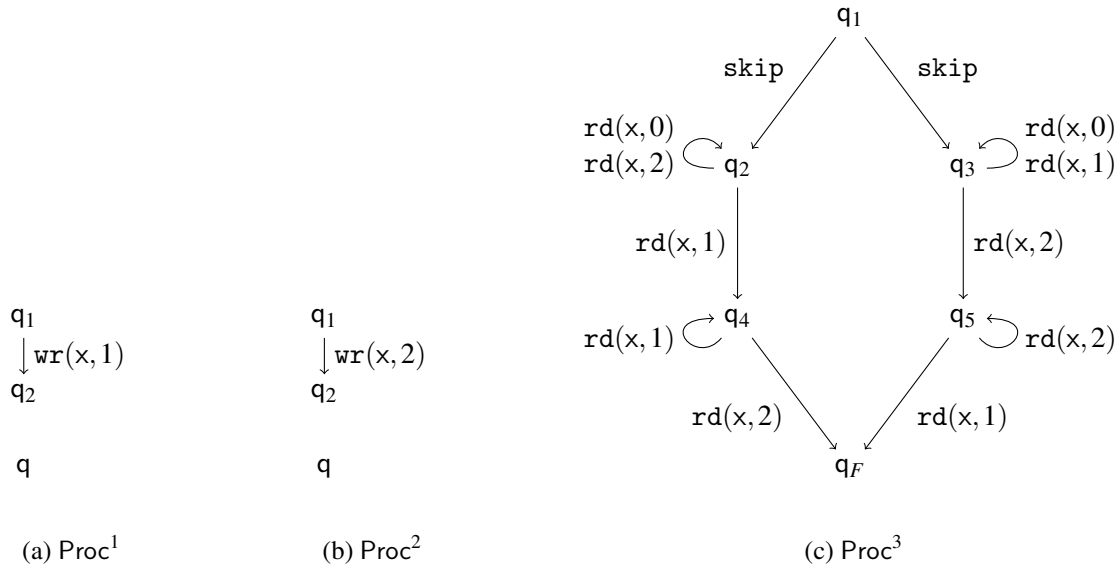

This is not the case if we add fairness conditions.
Consider the safety game with process fairness played on the program shown in \autoref{fig:load-buffer-sp}.
The target state is $\state_F$ in $\process^3$ and the initial value of $\xvar$ is $0$.
Since the process player cannot be deadlocked in any other state of $\process^3$, the only way for the update player to win is to force the play into $\state_F$.
In our classical game setting under store buffer semantics, the update player is able to achieve this.
We describe a winning strategy for her.

Due to process fairness, the process player needs to eventually play in all three processes.
The update player waits until processes $\process^1$ and $\process^2$ are in their respective $\state_2$, and $\process_3$ is in either $\state_2$ or $\state_3$.
In the first case, if $\process_3$ is in state $\state_2$, she updates the buffer message of $\process^1$, which writes the value $1$ to the memory for variable $\xvar$.
The only enabled instruction for the process player is to move from $\state_2$ to $\state_4$ in $\process^3$.
Now, the update player updates the message from the buffer of $\process^2$, which forces the process player to move to the target state and lose.
In the other case, the update player performs the two update operations in the reverse order, which again forces the process player to enter the target state after two moves.

Next, consider the same game but under load buffer semantics.
We have not yet formally defined how they should work, but this is not necessary for our argument.
Assume that the process player as usual controls the program instructions and the update player in some way controls the nondeterministic buffer behaviour.
We will outline how the process player wins this game.

First, she plays in $\process^1$, then in $\process^2$.
At this point, the value of $\xvar$ in the memory is $2$, but the buffer of $\process^3$ might already contain messages of the form $\tuple{\xvar, 1}$ and $\tuple{\xvar, 2}$.
Note that it is only possible to have them in this exact order, i.e. it cannot be that there is some message $\tuple{\xvar, 2}$ that is older than another message $\tuple{\xvar, 1}$.
Furthermore, since the program has no other reachable write instructions, any message that will be added in the future must be $\tuple{\xvar, 2}$.
Now, the process player plays in $\process^3$ and moves to $\state_3$.
The update player needs the process player to eventually move to $\state_5$, which means she has to enable the instruction $\rd(\xvar, 2)$.
To do so, she deletes messages at the head of the buffer of $\process_3$ until it reaches a message $\tuple{\xvar, 2}$.
But due to the order of the messages in the buffer, this means that it lost all messages $\tuple{\xvar, 1}$ and also, as said previously, cannot add any more of them.
It follows that the process can never execute the next instruction $\rd(\xvar, 1)$ and is thus stuck in $\state_5$.
Since this is not a deadlock for the process player, it results in a winning play for her.

We conclude that TSO safety games with process fairness do not have the same winning configurations under store buffer semantics and load buffer semantics, respectively.
The same can be shown for reachability games with update fairness.
Since it does not yield any additional insights, we do not present the argument here.

\section{Conclusion and Future Work}
In this paper, we continue the work on two-player games played on programs running under TSO semantics.
We present a game model where one player controls the instructions of the program and the other player controls the buffer updates.
Our results show that both the reachability problem and the safety problem for these games reduce to the analysis of games on single-process programs.
Moreover, we show a bisimilarity to a game with a finite amount of configurations and use it to prove that the problems are in fact \pspace-complete.

The reduced complexity comes from the optimal behaviour of the two players.
The process player can always win by playing in only one single process, while the best strategy of the update player is to stay passive and not perform any buffer updates.
We rectify this by introducing fairness conditions for both players.
In reachability games, the update player is required to update each message eventually.
This allows the process player to wait for a write instruction to arrive in the memory.
In safety games, during an infinite run the process player has to perform instructions in all enabled processes infinitely often.
Both restrictions lead to the respective problems being undecidable.

Finally, we connect the game model to the alternative load buffer semantics of TSO.
We show that the equivalence between load buffer and store buffer that exists for classical TSO reachability does not carry over to the game setting.

\bigskip

This work analyses the basic winning conditions reachability and safety.
Future work may expand the focus to more expressive winning conditions, like Büchi (i.e. repeated reachability), Co-Büchi, Parity, Rabin, Streett or Muller.
Another way to expand is to look at other fairness conditions for the two players, for example transition fairness.
These two directions of research are not orthogonal to each other since Muller or even Streett conditions might be able to encode some forms of fairness conditions.

\newpage
\bibliographystyle{eptcs}
\bibliography{bibdatabase}

\end{document}